\newtheorem{theorem}{Theorem}[section]
\newtheorem{proposition}[theorem]{Proposition}
\numberwithin{equation}{section} 
\def\@seccntformat#1{\@ifundefined{#1@cntformat}%
	{\csname the#1\endcsname\quad}
	{\csname #1@cntformat\endcsname}
}
\newif\ifShowComments
\def\strutdepth{\dp\strutbox}
\def\druk#1{\strut\vadjust{\kern-\strutdepth
        {\vtop to \strutdepth{%
                \baselineskip\strutdepth\vss
                        \llap{\hbox{#1}\quad}\null}}}}
\newcommand{\myquad}[1][1]{\hspace*{#1em}\ignorespaces}
\DeclareMathOperator{\E}{\mathrm{E}}
\title{ \bf 
On a length-biased Birnbaum-Saunders regression model applied to meteorological data}
\author{Kessys L. P. Oliveira,\,
Bruno S. Castro,\,
Helton Saulo\, and\,  
Roberto Vila
\\ \small Department of Statistics, Universidade de Bras\'ilia, Bras\'ilia, Brazil 
}
\begin{document}
\maketitle

\begin{abstract}
The length-biased Birnbaum-Saunders distribution is both useful and practical for environmental sciences. In this paper, we initially derive some new properties for the length-biased Birnbaum-Saunders distribution, showing that one of its parameters is the mode and that it is bimodal. We then introduce a new regression model based on this distribution. We implement use the maximum likelihood method for parameter estimation, approach interval estimation and consider three types of residuals. An elaborate Monte Carlo study is carried out for evaluating the performance of the likelihood-based estimates, the confidence intervals and the empirical distribution of the residuals. Finally, we illustrate the proposed regression model with the use of a real meteorological data set.  \\
{\small {\bfseries Keywords.} Length-biased model; Mode regression; Bimodality; Monte Carlo simulation; Meteorological data.}
\end{abstract}



\smallskip 

\onehalfspacing

\section{Introduction}\label{sec:introduction}

Birnbaum-Saunders (BS) regression models have been widely used in the literature; see the literature review by \cite{bk:19}. Recently, \cite{ddlms:20} performed a comparison of three existing regression approaches, studied by \cite{rn:91}, \cite{lscb:14} and \cite{bz:15}, to deal with the modeling of asymmetric data following the BS distribution. Other recent studies involving BS regression models can be seen in \cite{slgs:2020a,slgs:2020b} and \cite{lsgs:2020c}. 

Length-biased distributions are special cases of weighted distributions; see \cite{sa:01}. In the area of environmental sciences, as noted by \cite{p:06}, the use of weighted distributions is more adequade, since observations from this area fall in the nonexperimental, nonreplicated, and nonrandom categories. Weighted distributions take into account these caracteristichcs by providing probability-adjusted models that consider the method of ascertainment; see \cite{p:06}.

In the context of BS models, \cite{lsa:09} proposed a length-biased version of the BS (LBS) distribution. The authors provided moments and some properties of this distribution, which was illustrated with real data related to water quality. Nevertheless, no regression model based on the LBS distribution has been proposed in the literature. Therefore, the primary objective of this paper is to propose a regression model based on the LBS distribution. The secondary objectives are: (i) to investigate some properties of the LBS distribution, such as bimodality and mode; (ii) to obtain point and interval estimates of the model parameters; (iii) to carry out Monte Carlo simulations to evaluate the performance of the estimates; and (iv) to discuss a real data application of the proposed methodology.

The rest of this paper proceeds as follows. In Section \ref{sec:2}, 
we describe briefly the LBS distribution proposed by \cite{lsa:09} and present some novel properties of this model. In Section \ref{sec:3}, we formulate the regression model based on the LBS distribution, and then detail the associated point and interval estimation and residual analysis. In Section \ref{sec:4}, we carry out Monte Carlo simulations, and an illustration with a meteorological data is done in Section \ref{sec:5}. Finally, in Section \ref{sec:6}, we discuss conclusions and some possible future research in this topic.

\section{The LBS distribution and some novel properties}\label{sec:2}

In this section, we briefly describes the LBS distribution. Then, some novel results on bimodality and monotonicity of the hazard rate (HR) of the LBS distribution are obtained.

\subsection{The LBS distribution}\label{sec:2.1}

Let $Y$ be a positive random variable with probability density function (PDF) $f_{Y}$. Then, the length-biased version of $Y$, denoted by $T$, has PDF
\begin{equation}\label{LB-density}
f_{T}(t) = \frac{tf_{Y}(t)}{\E(Y)}, \quad t>0, 
\end{equation}
provided the expectation $\E(Y)$ exists. In the case of the LBS distribution, the random variable $Y$ follows a BS distribution \citep{bs:69} with shape parameter $\alpha$ and scale parameter $\theta$, denoted by $Y \sim \textrm{BS}(\alpha, \theta)$, with $\E(Y) ={\theta}(\alpha^2 + 2)/2$. Thus, $T$ is a random variable following a LBS distribution with PDF given by
\begin{equation}
\label{eq:fdp}
f_{T}(t)=f_{T}(t;\alpha, \theta)
=
\frac{1}{\sqrt{2\pi}\alpha \theta(\alpha^2 + 2)} \left[ \left(\frac{t}{\theta} \right)^\frac{1}{2} + \left(\frac{\theta}{t} \right)^\frac{1}{2} \right] \exp \left[-\frac{1}{2\alpha^2} \left(\frac{t}{\theta}+\frac{\theta}{t}-2 \right) \right],
\end{equation}
with $t>0$, $\alpha>0$ and $\theta>0$. In this case, we write $T \sim \textrm{LBS}(\alpha, \theta)$. According to \cite{lsa:09}, the parameter $\theta$ in \eqref{eq:fdp} relates only to the scale, while the parameter $\alpha$ controls asymmetry and kurtosis of the distribution. In Subsection \ref{sec:2.2}, we prove that the parameter $\theta$ is the mode when $\alpha\leqslant 2$ and the distribution is bimodal when $\alpha> 2$; Figure \ref{figpdf:1} displays different shapes of the LBS PDF for different combinations of parameters. 

\begin{figure}[h!]
\vspace{-0.25cm}
\centering
{\includegraphics[scale=0.57]{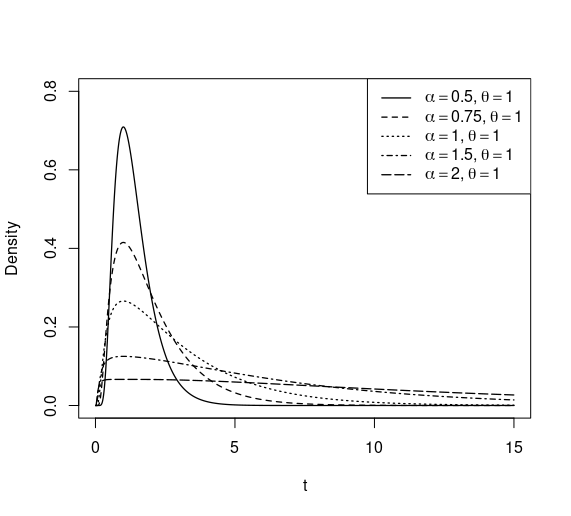}}
{\includegraphics[scale=0.57]{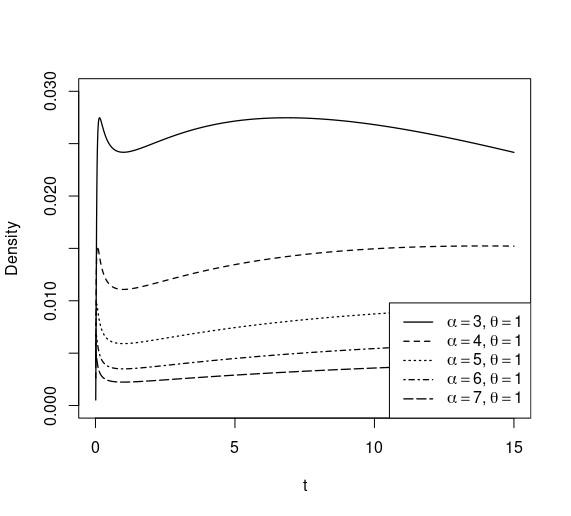}}\\
\vspace{-0.2cm}
\caption{LBS PDFs for some parameter values.}\label{figpdf:1}
\end{figure}

Let $T \sim \textrm{LBS}(\alpha, \theta)$; we then readily have the following properties \citep{lsa:09}: (P1) $cT \sim \textrm{LBS}(\alpha, c\theta)$, with $c>0$; (P2) $\E(T) = \theta\left(\dfrac{2+4\alpha^2+3\alpha^4}{2+\alpha^2}\right)$; (P3) $\textrm{Var}(T) = \theta^2\alpha^2\left[\dfrac{4+17\alpha^2+24\alpha^4+6\alpha^6}{(2+\alpha^2)^2}\right]$; (P4) $\E\big[T^{-(r+1)}\big] = \dfrac{\E(Y^r)}{\theta^{2r}\E(Y)}$, where $Y \sim \textrm{BS}(\alpha,\theta)$; and (P5) $U=\frac{1}{\alpha^2}\left(\frac{T}{\beta}+\frac{\beta}{T}-2 \right)$ has PDF $f_U(u)=\pi f_{U_1}(u)+(1-\pi) f_{U_2}(u)$, where $\pi=2/(\alpha^2+2)$, $U_1\sim \textrm{Gamma}(1/2,2)$ and $U_2\sim \textrm{Gamma}(3/2,2)$.

\subsection{Novel properties}\label{sec:2.2}

\subsubsection{Bimodality properties}\label{sec:2.2.1}
In order to state and prove the main result of this subsection (Theorem \ref{Unimodality-bimodality}),
we define the following function
\begin{align}\label{definition-a}
a(t)=\dfrac{1}{\alpha}\left[\sqrt{t\over \theta}-\sqrt{\theta\over t}\right], \quad t>0.
\end{align}
Notice that the $n$-th derivative of $a(t)$, denoted by $a^{(n)}(t)$, satisfies $a^{(n)}(t)> 0$ (or $< 0$) for $n$
odd (or $n$ even), where $n\geqslant 1$. Some special cases of these derivatives, when $n=1,2,3$, are of the following form:
\begin{align}\label{derivatives-a}
\hspace*{-0.3cm}
\begin{array}{lllll}
\displaystyle
a'(t)=\dfrac{1}{2\alpha t}\left[\sqrt{t\over \theta}+\sqrt{\theta\over t}\right];
&
\displaystyle
a''(t)=-\dfrac{1}{4\alpha t^2}\left[\sqrt{t\over \theta}+3\sqrt{\theta\over t}\right];
&
\displaystyle
a'''(t)=\dfrac{3}{8\alpha t^3}\left[\sqrt{t\over \theta}+5\sqrt{\theta\over t}\right]. 
\end{array}
\end{align}
\begin{proposition}[Modes]\label{Mode}
A mode $t$ of the LBS distribution is obtained by resolving the following cubic equation 
\begin{align*}
t^3 - \theta (\alpha^2-1) t^2 + \theta^2 (\alpha^2-1) t -\theta^3 = 0.
\end{align*}
\end{proposition}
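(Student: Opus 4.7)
The plan is to set the logarithmic derivative of $f_T$ to zero and leverage the auxiliary function $a(t)$ introduced in \eqref{definition-a}. First I would observe that the exponent in \eqref{eq:fdp} can be rewritten as $-a(t)^2/2$ because $a(t)^2 = \alpha^{-2}(t/\theta + \theta/t - 2)$, while the bracketed factor $[\sqrt{t/\theta} + \sqrt{\theta/t}]$ equals $2\alpha t \, a'(t)$ by the first formula in \eqref{derivatives-a}. Therefore the PDF has the compact form $f_T(t) \propto t \, a'(t) \exp[-a(t)^2/2]$, whose log-derivative is
\begin{equation*}
\frac{\mathrm{d}}{\mathrm{d}t}\log f_T(t) = \frac{1}{t} + \frac{a''(t)}{a'(t)} - a(t)\, a'(t).
\end{equation*}
Setting this to zero and multiplying through by $t\, a'(t)$ yields the critical-point equation $a'(t) + t\, a''(t) = t\, a(t)\, a'(t)^2$.

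Next I would plug in the explicit formulas for $a'(t)$ and $a''(t)$ from \eqref{derivatives-a}. A short simplification of the left-hand side gives $a'(t) + t\, a''(t) = \frac{1}{4\alpha t}[\sqrt{t/\theta}-\sqrt{\theta/t}] = a(t)/(4t)$, using the very definition of $a(t)$. Hence the critical-point condition reduces to $a(t)[1 - 4t^2 a'(t)^2] = 0$, so the modes are split into two cases: either $a(t)=0$, which gives $t=\theta$, or $a'(t) = 1/(2t)$ (the positive root, since $a'>0$), which after clearing radicals becomes $(t+\theta)^2 = \alpha^2\theta t$, i.e.\ $t^2 - (\alpha^2-2)\theta t + \theta^2 = 0$.

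Finally, I would combine these two cases by multiplying: the product $(t-\theta)[t^2 - (\alpha^2-2)\theta t + \theta^2]$ expands to exactly $t^3 - (\alpha^2-1)\theta t^2 + (\alpha^2-1)\theta^2 t - \theta^3$, which is the cubic claimed in the proposition. The main obstacle is really only bookkeeping — recognizing the factorizations $\sqrt{t/\theta}\pm\sqrt{\theta/t} = (t\pm\theta)/\sqrt{t\theta}$ keeps the algebra tidy — and noting that the cubic packages together the ``central'' critical point $t=\theta$ with the pair of symmetric roots coming from the quadratic; the subsequent bimodality analysis (in the direction of the main Theorem \ref{Unimodality-bimodality}) will then hinge on when the quadratic discriminant $(\alpha^2-2)^2\theta^2 - 4\theta^2$ is positive, i.e.\ when $\alpha>2$.
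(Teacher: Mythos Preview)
Your proof is correct and reaches the same critical-point equation $a'(t)+t\,a''(t)=t\,a(t)\,[a'(t)]^2$ that the paper obtains (there via the product-rule identity $f_T'=(f_Y+t f_Y')/\E(Y)$ with $f_Y(t)=a'(t)\phi[a(t)]$, rather than a log-derivative). The one substantive difference is in the simplification: the paper just invokes ``simple algebraic manipulations'' to pass from that equation to the cubic, whereas you recognise that the left side collapses to $a(t)/(4t)$, so the equation factors as $a(t)\bigl[1-4t^2 a'(t)^2\bigr]=0$. This immediately exhibits $t=\theta$ as a critical point and reduces the remaining condition to the quadratic $t^2-(\alpha^2-2)\theta t+\theta^2=0$, a factorisation the paper only recovers later, inside the proof of Theorem~\ref{Unimodality-bimodality}. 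Your route is thus slightly more transparent and already sets up the discriminant criterion $\alpha>2$ that drives the bimodality result.
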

\begin{proof}
Let  $T \sim \textrm{LBS}(\alpha, \theta)$.
A simple computation shows that the $n$th derivative of $f_T$ is given by 
\begin{align}\label{n-derivative}
f^{(n)}_{T}(t) 
=
\dfrac{1}{\E(Y)} \,
\left[
n f^{(n-1)}_{Y}(t) + t f^{(n)}_{Y}(t)
\right], \quad n\geqslant 1,
\end{align}
where $Y \sim \textrm{BS}(\alpha, \theta)$ and we are denoting $f^{(0)}_{Y}(t)=f_{Y}(t)$. Since
\begin{align}
f'_{Y}(t) = \phi\big[a(t)\big]
\left\{a''(t)-a(t)[a'(t)]^2\right\}, \label{derivative-f}
\end{align}
by combining \eqref{n-derivative} and \eqref{derivative-f} we have
\begin{align*}
f'_{T}(t) 
=
\dfrac{\phi\big[a(t)\big]}{\E(Y)} \,
\left\{
a'(t) + t \left\{a''(t)-a(t)[a'(t)]^2\right\}
\right\}.
\end{align*}
By using \eqref{definition-a}, \eqref{derivatives-a} and simple algebraic manipulations we get that $f'_{T}(t) =0$ if and only if 
\begin{align*}
t^3 - \theta (\alpha^2-1) t^2 - \theta^2 (1-\alpha^2) t -\theta^3 = 0.
\end{align*}
Therefore, a mode $t$ of the LBS distribution must satisfy the above equation.

\end{proof}

The next theorem reveals that the parameter $\alpha$ of the LBS distribution, in addition to controlling asymmetry, controls the uni- or bimodal shape of the distribution regardless of the parameter $\theta$.
\begin{theorem}[Unimodality and Bimodality]
\label{Unimodality-bimodality}
The PDF of the  LBS distribution \eqref{eq:fdp} has the following shapes:
\begin{itemize}
\item[1)] It is unimodal as $\alpha\leqslant 2$, with mode $t_0=\theta$;
\item[2)] It is bimodal as $\alpha> 2$, with modes 
\begin{align}\label{modes-t}
t_{\pm}=
{\theta\over 2}\,
\Big[(\alpha^2-2)\pm\alpha\sqrt{(\alpha-2)(\alpha+2)}\Big]
\end{align}
and with minimum point $t_0=\theta$.
\end{itemize}
\end{theorem}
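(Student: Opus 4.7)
The plan is to exploit Proposition \ref{Mode} and factor the cubic it produces. Setting $t=\theta s$ in the cubic $t^3-\theta(\alpha^2-1)t^2+\theta^2(\alpha^2-1)t-\theta^3=0$ turns it into
\begin{align*}
s^3-(\alpha^2-1)s^2+(\alpha^2-1)s-1=0,
\end{align*}
and one checks by inspection that $s=1$ is always a root. Polynomial division then gives the factorization $(s-1)\bigl(s^2-(\alpha^2-2)s+1\bigr)=0$, so the remaining critical points come from a quadratic whose discriminant is $(\alpha^2-2)^2-4=\alpha^2(\alpha^2-4)$. The roots of that quadratic are exactly the $t_\pm/\theta$ of formula \eqref{modes-t}, which are real precisely when $\alpha\geqslant 2$.

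For part 1), when $\alpha\leqslant 2$ the discriminant is non-positive, so the only real critical point of $f_T$ is $t_0=\theta$. Combined with the obvious boundary behavior $f_T(0^+)=0$, $f_T(t)\to 0$ as $t\to\infty$, and $f_T>0$ on $(0,\infty)$, this forces $t_0=\theta$ to be the unique global maximum, hence the mode. (The case $\alpha=2$ is a degenerate triple root, but the boundary argument is unchanged.)

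For part 2), when $\alpha>2$ the three roots are distinct. I first need to confirm that $t_-,t_+$ are positive and straddle $\theta$. By Vieta's formulas on the quadratic factor, $(t_+/\theta)(t_-/\theta)=1$ and $(t_++t_-)/\theta=\alpha^2-2>2$, so both $t_\pm>0$, and the identity $t_+t_-=\theta^2$ together with $t_+\neq t_-$ forces $t_-<\theta<t_+$. Now $f_T$ is smooth and positive on $(0,\infty)$ with $f_T(0^+)=f_T(\infty)=0$ and exactly three critical points $0<t_-<\theta<t_+$; by a standard sign-change argument for $f'_T$, they must alternate max–min–max, which identifies $t_\pm$ as the two modes and $t_0=\theta$ as the local minimum.

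The only substantive step is the cubic factorization; everything else is bookkeeping on root locations and the qualitative shape of $f_T$. The main obstacle I anticipate is not computational but presentational: making the alternation argument in part 2) fully rigorous (for instance, by observing that $f'_T(t)$ has at most three zeros, so its sign changes are forced by the vanishing of $f_T$ at the endpoints). I do not expect to need any additional structural information about $f_T$ beyond what \eqref{eq:fdp} and Proposition \ref{Mode} already provide.
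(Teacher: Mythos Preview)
Your proposal is correct and shares the paper's core idea: reduce the critical-point equation to the cubic of Proposition~\ref{Mode} and factor out the root $t=\theta$. The paper in fact arrives at the same factorization $(t-\theta)\bigl[t^2-\theta(\alpha^2-2)t+\theta^2\bigr]=0$ and the same formulae $t_\pm$. Where you differ is in the surrounding machinery. Before factoring, the paper invokes Descartes' rule of signs and computes the full cubic discriminant $\Delta=\theta^6\alpha^2(\alpha-2)^3(\alpha+2)^3$ to classify the root count; you bypass all of that by substituting $t=\theta s$, spotting $s=1$, and reading off the quadratic discriminant $\alpha^2(\alpha^2-4)$ directly. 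For the bimodal case, the paper confirms that $\theta$ is a local minimum by an explicit second-derivative calculation, obtaining $f''_T(\theta)=\dfrac{(\alpha-2)(\alpha+2)}{4\sqrt{2\pi}\,\alpha^3\theta^2\,\E(Y)}>0$; you instead use the max--min--max alternation forced by three critical points together with $f_T(0^+)=f_T(\infty)=0$. Your route is shorter and avoids the derivative bookkeeping; the paper's route gives the concrete value of $f''_T(\theta)$ but at the cost of more computation. To make your alternation step airtight (the point you yourself flag), it suffices to note that the bracketed factor in $f'_T$ equals $-p_3(t)$ times a strictly positive function of $t$, so $f'_T$ changes sign at each simple root of $p_3$; no further structural information is needed.
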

\begin{proof}
By Proposition \ref{Mode}, a mode $t$ of the LBS distribution satisfies the following cubic equation
\begin{align}\label{cubic-pol}
p_3(t)=t^3 - \theta (\alpha^2-1) t^2 + \theta^2 (\alpha^2-1) t -\theta^3 = 0.
\end{align}	
By using Descartes' rule of signs
(see, e.g. \cite{xue2000loop}) in \eqref{cubic-pol}, we have the following statements:  
\begin{itemize}
\item[a)] $p_3(t)$ has exactly one positive root when $\alpha\leqslant 1$;
\item[b)]  $p_3(t)$ has three or one positive roots when $\alpha>1$.	
\end{itemize}

On the other hand, it is well-known that
the discriminant of a cubic polynomial $ax^3 + bx^2 + cx + d$ is given by 
$\Delta =18abcd-4b^3d + b^2c^2- 4ac^3 - 27a^2d^2$.
In our case, we have
\begin{align*}
\Delta 
&=
\theta^6
\left[
(\alpha^2-1)^4-8(\alpha^2-1)^3+18(\alpha^2-1)^2-27
\right]
\\[0,2cm]
&=
\theta^6 \alpha^2 (\alpha-2)^3(\alpha+2)^3.
\end{align*}
Hence, the following statements with respect to $\Delta$ follow:
\begin{itemize}
	\item[c)] 
	$\Delta < 0$ when $\alpha< 2$. By using a) and b), this implies that
	$p_3(t)$ has exactly one positive root and two complex conjugate non-real roots;
	\item[d)] 
	$\Delta = 0$ when $\alpha= 2$.  By using a) and b), this implies that,
	$p_3(t)$ has exactly one positive triple root;
	\item[e)]  
	$\Delta >0$ when $\alpha>2$.  By using a) and b), this implies that,
	$p_3(t)$ has three distinct positive roots.	
\end{itemize}

We are now ready to prove Items 1) and 2). Indeed, note that $t_0=\theta$ is a critical point of the LBS density $f_T(t)$, $t>0$, because $p_3(\theta)=0$. Then,
since $\lim_{t\to 0^+}f_{T}(t)=0$ and $\lim_{t\to +\infty}f_{T}(t)=0$, Items c) and d) imply the statement in Item 1); and Item e) implies that the LBS density is bimodal as $\alpha> 2$ whenever $t_0=\theta$ is a minimum point. To complete the proof of Item 2), it remains to show that, when $\alpha>2$,
\begin{itemize}
\item[(I)]  $t_0=\theta$ is a minimum point for the LBS density, and that
\item[(II)] the modes of the LBS density are given by $t_{\pm}$ defined in \eqref{modes-t}.
\end{itemize}

To verify Item (I), it is sufficient to check that $f''_{Y}(\theta)>0$ when $\alpha> 2$.  Indeed, note that
\begin{align}\label{derivative-f-1}
f''_{Y}(t) = \phi\big[a(t)\big]
\left\{a'''(t)+\big[a^2(t)-1\big][a'(t)]^3-3a(t)a'(t)a''(t)\right\}
\end{align}
and that
\begin{align*}
f''_{T}(t) 
\stackrel{\eqref{n-derivative}}{=}
\dfrac{1}{\E(Y)} \,
\left[
2 f'_{Y}(t) + t f''_{Y}(t)
\right].
\end{align*}
Substituting \eqref{derivative-f-1} in the above equation we have
\begin{align*}
f''_T(t)&=
\dfrac{2}{\E(Y)} \,
f'_{Y}(t) 
+
\dfrac{t \phi\big[a(t)\big]}{\E(Y)} \, 
\left\{a'''(t)+\big[a^2(t)-1\big][a'(t)]^3-3a(t)a'(t)a''(t)\right\}.
\end{align*}
By using the relations $a(\theta)=0$, $f'_T(\theta)=0$, and the expressions for the derivatives of $a(t)$ given in \eqref{derivatives-a}, we get 
\begin{align*}
f''_T(\theta)
=
{(\alpha-2)(\alpha+2)\over 4\sqrt{2\pi}\alpha^3\theta^2 \E(Y)}
>0,
\end{align*}
whenever $\alpha>2$. Then the statement in Item (I) follows.

In what remains of the proof we check Item (II).
Indeed, since $t_0=\theta$ is a critical point for the LBS density, note the cubic equation \eqref{cubic-pol} can be written as
\begin{align}
p_3(t)=
(t-\theta)
\big[
t^2-\theta(\alpha^2-2)t+\theta^2
\big]
= 0.
\end{align}	
Bhaskara's Formula gives the following roots for $p_3(t)$:
\begin{align*}
t_0=\theta; \quad 
t_{\pm}=
{\theta\over 2}\,
\Big[(\alpha^2-2)\pm\alpha\sqrt{(\alpha-2)(\alpha+2)}\Big].
\end{align*} 
Since $t_{-}<t_0<t_{+}$ and, $\lim_{t\to 0^+}f_{T}(t)=0$ and $\lim_{t\to +\infty}f_{T}(t)=0$, the proof of Item (II) follows.
We thus complete the proof of Item 2).

\end{proof}

\subsubsection{Some properties of the HR of the LBS distribution}
The survival function (SF) and HR of the LBS distribution are given respectively by
\begin{equation}\label{eq:SF}
\begin{aligned}
S_T(t) &= 1-\int_{0}^{t}f_{T}(\xi)\, {\rm d}\xi , \quad t>0,
\\[0,2cm]
&=
1-\Phi\big[a(t)\big]
-
{\alpha^2\theta\over 2\E(Y)}\,
\big\{
{\rm e}^{2/\alpha^2}
\big\{
\Phi\left[ A(t) \right]-1
\big\}
-
\phi\big[a(t)\big]
\left[a(t)+ A(t) \right]
\big\},
\end{aligned}
\end{equation}
where $A(t)={\sqrt{4+\alpha^2a^2(t)}/\alpha}$, 
and
\begin{equation}\label{eq:HR}
\begin{aligned}
H_T(t)&={f_{T}(t)\over S_T(t)}, \quad t>0,
\\[0,2cm]
&=
{
{t\phi\big[a(t)\big] a'(t)}
\over
\E(Y)\big\{1-\Phi\big[a(t)\big]\big\}
-
{\alpha^2\theta\over 2}\,
\big\{
{\rm e}^{2/\alpha^2}
\big\{
\Phi\left[ A(t) \right]-1
\big\}
-
\phi\big[a(t)\big]
\left[a(t)+ A(t) \right]
\big\}
},
\end{aligned}
\end{equation}
with $\phi(\cdot)$ and $\Phi(\cdot)$ being the standard normal PDF and cumulative distribution function (CDF), respectively.

To enunciate and prove the following two results, we follow the same notations as Theorem \ref{Unimodality-bimodality}.
\begin{proposition}[Monotonicity of HR, case $\alpha\leqslant 2$]
	\label{Monotonicity-1}
	The HR of the LBS distribution with $\alpha\leqslant 2$ has the following monotonic properties:
\begin{itemize}
\item[1)] It is increasing for all $t<\theta$;
\item[2)] It is decreasing for all $t>t_1$, for some $t_1\geqslant\theta$.
\end{itemize}
\end{proposition}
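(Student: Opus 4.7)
The plan is to use the log-derivative identity. With $\eta(t) \coloneqq -f_T'(t)/f_T(t)$, a direct differentiation of $H_T = f_T/S_T$ using $S_T'=-f_T$ yields
\[
H_T'(t) = H_T(t)\big[H_T(t) - \eta(t)\big],
\]
so the sign of $H_T'(t)$ coincides with the sign of $H_T(t) - \eta(t)$. This reduces everything to understanding $\eta$ relative to $H_T$.

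Part 1) is immediate from Theorem \ref{Unimodality-bimodality} (item 1)): for $\alpha\leqslant 2$ the LBS PDF is strictly increasing on $(0,\theta)$, hence $f_T'(t)>0$ and $\eta(t)<0$ on that interval. Combined with $H_T(t)>0$, we obtain $H_T(t)-\eta(t)>0$, and consequently $H_T'(t)>0$ throughout $(0,\theta)$.

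For Part 2), I would use $f_T(t) = t f_Y(t)/\E(Y)$ and the identity $\log f_Y(t) = -\tfrac{1}{2}a^2(t) + \log a'(t) + \mathrm{const}$ for the underlying BS density (with $a$ as in \eqref{definition-a}) to derive the closed form
\[
\eta(t) = -\frac{1}{t} + a(t)\,a'(t) - \frac{a''(t)}{a'(t)}.
\]
Substituting \eqref{definition-a} and \eqref{derivatives-a} expresses $\eta$ as an explicit rational function of $u=\sqrt{t/\theta}$. My preferred route is then Glaser's scheme: show that $\eta$ is upside-down--bathtub (UBT) shaped, i.e.\ that $\eta'$ has a single sign change, from positive to negative. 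Together with the boundary behavior $\lim_{t\to 0^+}f_T(t)=0$ (clear from \eqref{eq:fdp}), Glaser's theorem then forces $H_T$ to be either monotone decreasing or UBT-shaped. Since Part 1) rules out pure monotone decreasing, $H_T$ must be UBT, so there is a unique change point $t_1\geqslant\theta$ beyond which $H_T$ decreases, yielding the claim.

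The main technical obstacle is proving the UBT structure of $\eta$. Setting $\eta'(t)=0$ and clearing denominators reduces to a polynomial equation in $u=\sqrt{t/\theta}$; I would attempt to show, via Descartes' rule of signs in analogy with the cubic-equation analysis of Theorem \ref{Unimodality-bimodality}, that this polynomial has exactly one positive root under the hypothesis $\alpha\leqslant 2$. Pinpointing the location of that root relative to $u=1$ (equivalently $t=\theta$) then fixes $t_1$ and closes the proof.
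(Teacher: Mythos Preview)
Your Part 1) is correct and matches the paper's argument: the paper simply writes $H_T=f_T\cdot(1/S_T)$ as a product of positive increasing functions on $(0,\theta)$, using the unimodality of Theorem~\ref{Unimodality-bimodality}, which is exactly the input you feed into the identity $H_T'=H_T(H_T-\eta)$.

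For Part 2) the paper takes a shorter and different route: it does not analyse $\eta$ globally, but fixes a right-tail interval $I=(t_1,\infty)$ on which $f_T$ is convex, argues that $\eta=-f_T'/f_T$ is decreasing on $I$ ``as a product of nonnegative decreasing functions'', and then invokes Glaser's criterion to conclude that $H_T$ decreases on $I$. Your plan instead aims for the full UBT shape of $\eta$ via an explicit polynomial and Descartes' rule. There is, however, a concrete obstruction to your programme: $\eta$ is \emph{not} UBT in this model, so the polynomial analysis you propose cannot deliver the single positive-to-negative sign change of $\eta'$ that you need. Carrying your own formula through with $u=\sqrt{t/\theta}$ gives
\[
\eta=\frac{(u^{2}-1)\big[(u^{2}+1)^{2}-\alpha^{2}u^{2}\big]}{2\alpha^{2}\theta\,u^{4}(u^{2}+1)},
\]
and for $\alpha=\theta=1$ this becomes $\eta(t)=(t^{3}-1)/[2t^{2}(t+1)]$ (here $t=u^{2}$), with derivative numerator $2t(t^{3}+3t+2)>0$ for all $t>0$. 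Thus $\eta$ is strictly increasing on $(0,\infty)$, and Glaser's theorem then forces $H_T$ to be IFR throughout, not eventually decreasing. In particular, the Descartes' rule step in your outline would reveal that the relevant equation $\eta'(t)=0$ has no positive root, so the single-sign-change conclusion is unreachable; before investing further effort in this route you should re-examine whether the statement of Part 2) is actually true.
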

\begin{proof}
When $\alpha\leqslant 2$ the LBS density is unimodal with mode $t_0=\theta$ (Theorem \ref{Unimodality-bimodality}). Then there is $t_1\geqslant\theta$ so that the LBS density is a concave upward function on the interval $I=(t_1,+\infty)$. In other words, $f_T''(t)>0$ for all $t\in I$. Equivalently, the function $f_T'(t)$ decreases on $I$. But since, by unimodality, the LBS density $f_T$ decreases on this interval, we have that the function $G_T(t)$, defined as
\begin{align*}
G_T(t)=-\, {f_T'(t)\over f_T(t)},
\end{align*}
decreases for all $t\in I$, because $G_T(t)$, $t\in I$, is a product of nonnegative
decreasing functions. Hence, by \cite{10.2307/2287666} the function $H_T(t)$ is decreasing for all $t>t_1$. This proves the second item.

In what follows we prove Item 1).
Indeed, by unimodality of the LBS density (Theorem \ref{Unimodality-bimodality}), the LBS dentity $f_T$ increases on $(0,\theta)$. Hence, since the SF $S_T(t)$, $t>0$, is decreasing, by definition of HR it follows that $H_T(t)$  is a product of nonnegative increasing functions.
Therefore, it is a increasing function for all $t<\theta$.

\end{proof}

\begin{proposition}[Monotonicity of HR, case $\alpha> 2$]
	The HR of the LBS distribution with $\alpha> 2$ has the following monotonic properties:
	\begin{itemize}
		\item[1)] It is increasing for all $t<t_{-}$ ou for all $\theta<t<t_{+}$;
		\item[2)] It is decreasing for all $t_1<t<\theta$ ou for all $t>t_2$, for some $t_{-}<t_1<\theta$ and $t_2>t_{+}$.
	\end{itemize}
\end{proposition}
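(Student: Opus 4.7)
The plan is to parallel the proof of Proposition \ref{Monotonicity-1} with the modifications required by the bimodal case. Theorem \ref{Unimodality-bimodality} supplies the shape information needed: when $\alpha>2$, $f_T$ has local maxima at $t_{\pm}$, a local minimum at $\theta$ with $t_{-}<\theta<t_{+}$, and $\lim_{t\to 0^+}f_T(t)=\lim_{t\to +\infty}f_T(t)=0$. Throughout, I rely on Glaser's criterion from \cite{10.2307/2287666}: if the function $G_T(t)=-f_T'(t)/f_T(t)$ is monotone on an interval, then $H_T$ shares that monotonicity on the same interval.

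For Item 1, on $(0,t_{-})$ the density $f_T$ is strictly increasing toward the first mode, and on $(\theta,t_{+})$ it is strictly increasing from the interior minimum to the second mode. Since $S_T$ is strictly decreasing on $(0,+\infty)$, its reciprocal is strictly increasing there, so $H_T(t)=f_T(t)\cdot 1/S_T(t)$ is a product of two positive strictly increasing functions, and hence strictly increasing, on each of $(0,t_{-})$ and $(\theta,t_{+})$. This is the same product argument used in Item 1 of Proposition \ref{Monotonicity-1}.

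For Item 2 the product trick is unavailable, since on $(t_{-},\theta)$ and on $(t_{+},+\infty)$ the density $f_T$ is decreasing while $1/S_T$ is increasing. For the tail interval, because $f_T(t)\to 0$ as $t\to+\infty$ while $f_T$ is strictly decreasing on $(t_{+},+\infty)$, the density must eventually be concave upward, so there exists $t_2>t_{+}$ with $f_T''>0$ and $f_T'<0$ on $(t_2,+\infty)$; arguing as in Proposition \ref{Monotonicity-1} this makes $G_T$ decreasing there, and Glaser's theorem yields that $H_T$ is decreasing on $(t_2,+\infty)$. For the interior interval, Item (I) in the proof of Theorem \ref{Unimodality-bimodality} gives $f_T''(\theta)>0$, so by continuity $f_T''>0$ on a neighborhood of $\theta$; intersecting this neighborhood with $(t_{-},\theta)$ (on which $f_T'<0$ because the density descends from the mode $t_{-}$ to the minimum $\theta$) produces an interval $(t_1,\theta)$ with $t_{-}<t_1<\theta$ on which $f_T$ is decreasing and concave up, and the same Glaser-type argument applies.

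The principal technical obstacle is making the Glaser step fully rigorous. The phrasing in Proposition \ref{Monotonicity-1} that $G_T=(-f_T')\cdot(1/f_T)$ is a product of nonnegative decreasing functions is a convenient shorthand whose genuine content is log-convexity of $f_T$, namely the inequality $(f_T')^2\leqslant f_T'' f_T$, on the interval in question. To make the present plan airtight I would verify this log-convexity on $(t_1,\theta)$ and on $(t_2,+\infty)$ using the explicit derivatives of $a(t)$ in \eqref{derivatives-a} together with the formulas \eqref{derivative-f} and \eqref{derivative-f-1} for $f_Y'$ and $f_Y''$; once the inequality is in hand, the monotonicity conclusions for $H_T$ follow immediately from Glaser's theorem combined with the shape information from Theorem \ref{Unimodality-bimodality}.
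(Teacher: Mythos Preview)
Your approach is exactly what the paper has in mind: its own proof of this proposition is omitted, pointing only to Theorem \ref{Unimodality-bimodality} and ``analogous reasoning to the proof of Proposition \ref{Monotonicity-1},'' and you have carried that analogy out in full detail on the four intervals. Your observation that the ``product of nonnegative decreasing functions'' phrasing for $G_T$ is loose (since $1/f_T$ is increasing where $f_T$ decreases) and that the genuine content is log-convexity is a fair critique that applies equally to the paper's argument in Proposition \ref{Monotonicity-1} itself, so your proposed verification via \eqref{derivatives-a}, \eqref{derivative-f}, and \eqref{derivative-f-1} would sharpen both proofs.
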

\begin{proof}
The proof follows by using Theorem \ref{Unimodality-bimodality} and an analogous reasoning to the proof of Proposition \ref{Monotonicity-1}. Therefore, this one is omitted.

\end{proof}

\section{The LBS regression model}\label{sec:3}

In this section, we formulate the LBS regression model, and then detail the associated estimation, inference 
and residual analysis based on the maximum likehood method.

\subsection{The model and maximum likelihood estimation}\label{sec:3.1}

Let $T_1, T_2,\ldots, T_n$ denote independent random variables, where $T_i\sim \textrm{LBS}(\alpha_i, \theta_{i})$, with observed values denoted by $t_1, t_2, \ldots, t_n$, respectively. Then, the LBS regression model is formulated as 
\begin{equation}\label{eq:pred} 
\begin{gathered} 
\eta_{1i} = {g}_1(\theta_i) = \sum_{j=1}^{p}{ x_{ij}{\beta_j}} = \bm x_{i}^{\top}\bm{\beta}, \quad i=1,\ldots,n;
\\
\eta_{2i} = {g}_2(\alpha_i) = \sum_{j=1}^{q}{ w_{ij}{\rho_j}} = \bm w_{i}^{\top}\bm{\rho}, \quad i=1,\ldots,n;
\end{gathered}
\end{equation}
where $\bm{\beta} = (\beta_1, \dots, \beta_p)^{\top}$ and $\bm{\rho} = (\rho_1, \dots, \rho_q)^{\top}$ are vectors of unknown parameters, with $\bm{\beta} \in \mathbb{R}^{p}$, $\bm{\rho} \in \mathbb{R}^{q}$ and $p+q<n$. The vectors $\bm x_{i}^{\top} = (x_{i1}, \dots, x_{ip})$ and $\bm w_{i}^{\top} = (w_{i1}, \dots, w_{ip})$ are known covariates from the $i$-th row of the matrices $\mathbf{X} \in \mathbb{R}^{n \times p}$ and $\mathbf{W} \in \mathbb{R}^{n \times q}$ which are full rank, i.e., $rank(\mathbf{X}) = p$ and $rank(\mathbf{W}) = q$. Commonly $\mathbf{x}_{\mathord{\cdot} 1} = \bm1_{n}^{\top}$ and $\mathbf{w}_{\mathord{\cdot} 1} = \bm1_{n}^{\top}$, where $\bm1_{n}^{\top}$ is a 1's vector size $n$. The link function ${g:} (0, +\infty) \rightarrow \mathbb{R}$ is invertible and at least twice differentiable. Usually $g(\theta) = \log(\theta)$ (log function) and $g(\theta) = \sqrt{\theta}$ (square root function).

The corresponding likelihood function for $\bm\delta=(\bm\rho^\top, \bm\beta^\top)^\top$ is
\begin{equation}\label{lik:fun}
L(\bm\delta)=\prod_{i=1}^{n} f_T(t_i; \alpha_i, \theta_i),
\end{equation}
where $f(\cdot)$ is the LBS PDF given in \eqref{eq:fdp}. By taking the logarithm in \eqref{lik:fun}, we obtain the log-likelihood function for $\bm\delta=(\bm\rho^\top, \bm\beta^\top)^\top$ as
\begin{equation}
\label{eq:log-vero}
\ell(\bm\delta) \propto -\sum_{i=1}^{n}\frac{1}{2\alpha_i^2}\left(\frac{t_i}{\theta_i} + \frac{\theta_i}{t_i}-2\right) - \sum_{i=1}^{n}\log(2\alpha_i + \alpha_i^3) -\frac{3}{2}\sum_{i=1}^{n}\log(\theta_i) + \sum_{i=1}^{n}\log(t_i + \theta_i),
\end{equation}
where $\theta_{i} = {g_1}^{-1}(\eta_{1i})$ and $\alpha_{i} = {g_2}^{-1}(\eta_{2i})$, as defined in \eqref{eq:pred}. 

The maximum likelihood estimators of the LBS regression model parameters are the solution of the equation $\bm{\dot{\ell}} = \bm 0$, where $\bm{\dot\ell}=(\bm{\dot\ell}_{\beta}^\top, \bm{\dot\ell}_{\rho}^\top)^\top$ is the gradient vector, with the first derivatives given by
\begin{equation}
\bm{\dot{\ell}} = \begin{pmatrix}
\dfrac{\partial \ell(\bm\delta)}{\partial \bm\beta} \\
\\
\dfrac{\partial \ell(\bm\delta)}{\partial \bm\rho} \\
\end{pmatrix} =
\begin{pmatrix}
\bm{\dot{\ell}}_{\beta} \\ 
\bm{\dot{\ell}}_{\rho} \\
\end{pmatrix} =
\begin{pmatrix}
\bm{X}^\top \bm{\mathcal{A}} \bm{z} \\
\bm{W}^\top \bm{\mathcal{B}} \bm{c} \\
\end{pmatrix},
\end{equation}
with
\begin{equation}
\begin{aligned}
\label{eq:grad}
& \bm{\dot{\ell}}_{\beta} = \sum_{i=1}^{n} \underbrace{ \left\lbrace \frac{1}{t_i + \theta_i} - \frac{1}{2\alpha_i^2} \left(\frac{1}{t_i} - \frac{t_i}{\theta_i^2} \right) - \frac{3}{2\theta_i} \right\rbrace }_{z_i} \underbrace{ \frac{1}{g'_1(\theta_i)} }_{a_i} \bm{x}_{i} = \sum_{i=1}^{n} z_i a_i \bm{x}_{i}; \\
& \bm{\dot{\ell}}_{\rho} = \sum_{i=1}^{n} \underbrace{ \left\lbrace \frac{1}{\alpha_i^3} \left(\frac{t_i}{\theta_i} + \frac{\theta_i}{t_i}-2 \right) - \frac{(2 + 3\alpha_i^2)}{2\alpha_i + \alpha_i^3} \right\rbrace }_{c_i} \underbrace{ \frac{1}{g'_2(\alpha_i)} }_{b_i} \bm{w}_{i} = \sum_{i=1}^{n} c_i b_i \bm{w}_{i}; \\
\end{aligned}
\end{equation}
where $\bm{\dot{\ell}}_\beta = (\dot{\ell}_{\beta_1}, \ldots, \dot{\ell}_{\beta_p})^\top$, $\bm{\dot{\ell}}_\rho = (\dot{\ell}_{\rho_1}, \ldots, \dot{\ell}_{\rho_q})^\top$, $\bm{z} = (z_1,\ldots,z_n)^\top$, $\bm{c} = (c_1,\ldots,c_n)^\top$, $\bm{\mathcal{A}} = \text{diag}(a_1,\ldots,a_n)$ and $\bm{\mathcal{B}} = \text{diag}(b_1,\ldots,b_n)$. 

However, these equations do not have a closed form, requiring the use of iterative numerical methods to solve them. They are solved using the BFGS quasi-Newton method; see \cite{mjm:00}[p.\,199]. Under regularity conditions \citep{ch:74}, the asymptotic distribution of $\widehat{\bm{\delta}}$ is a multivariate normal, that is,
\begin{equation}\label{asym:dist}
 \sqrt{n}(\widehat{{\bm\delta}}-{\bm\delta})\dot{\sim}\textrm{N}_{p+q}\left(\bm{0}_{p+q}, {\bm{\Sigma}}_{{\bm{\delta}}}\right),
\end{equation}
where $\,\dot{\sim}\,$ denotes convergence in distribution and ${\bm{\Sigma}}_{{\bm{\delta}}}$ is the asymptotic covariance matrix of $\widehat{\bm{\delta}}$, which is the inverse of the expected Fisher information matrix. One can approximate the expected Fisher information matrix by its observed version obtained from the Hessian matrix $\ddot{\bm \ell}({\bm\delta})$, such that ${{\bm{\Sigma}}}_{{\bm{\delta}}}\approx [-\ddot{\bm \ell}({\bm\delta})]^{-1}$. The standard errors (SEs) can then be approximated by the square roots of the diagonal elements in the covariance matrix evaluated at $\widehat{{\bm\delta}}$. Note that
\begin{equation}
\bm{\ddot{\ell}}({\bm\delta}) = \begin{pmatrix}
\dfrac{\partial^2 \ell(\bm\delta)}{\partial \bm\beta \partial \bm\beta^\top} & \dfrac{\partial^2 \ell(\bm\delta)}{\partial \bm\beta \partial \rho^\top} \\ \\
\dfrac{\partial^2 \ell(\bm\delta)}{\partial \rho \partial \bm\beta^\top} & \dfrac{\partial^2 \ell(\bm\delta)}{\partial \rho \partial \rho^\top} \\
\end{pmatrix} =
\begin{pmatrix}
\bm{\ddot{\ell}}_{\beta\beta} & \bm{\ddot{\ell}}_{\beta\rho} \\ 
\bm{\ddot{\ell}}_{\rho\beta} & \bm{\ddot{\ell}}_{\rho\rho} \\
\end{pmatrix} =
\begin{pmatrix}
\bm{X}^\top \bm{\mathcal{V}} \bm{X} & \bm{X}^\top \bm{\mathcal{H}} \bm{W} \\
\bm{W}^\top \bm{\mathcal{H}} \bm{X}  & \bm{W}^\top \bm{\mathcal{U}} \bm{W} \\
\end{pmatrix},
\end{equation}
with
\begin{align*} \label{eq:hess}
& \bm{\ddot{\ell}}_{\beta\beta} = \sum_{i=1}^{n} \underbrace{ \left\lbrace \frac{3}{2\theta^2_i} - \frac{t_i}{\alpha_i^2 \theta^3_i} - \frac{1}{(t_i + \theta_i)^2}  \right\rbrace }_{z_i'} \underbrace{ \left[\frac{1}{g'_1(\theta_i)}\right]^2}_{a_i^2} \bm{x}_{i} \bm{x}_{i}^\top + \\ 
& \myquad[1.4] + \sum_{i=1}^{n} \underbrace{ \left\lbrace \frac{1}{t_i + \theta_i} - \frac{1}{2\alpha_i^2} \left(\frac{1}{t_i} - \frac{t_i}{\theta_i^2} \right) - \frac{3}{2\theta_i} \right\rbrace }_{z_i} \underbrace{ \left( -\frac{g''_1(\theta_i)}{[g'_1(\theta_i)]^2} \right) }_{d_i} \underbrace{ \frac{1}{g'_1(\theta_i)} }_{a_i} \bm{x}_{i} \bm{x}_{i}^\top \\
& \myquad[1.4] = \sum_{i=1}^{n} \underbrace{(z_{i}' a_i^2 + z_i d_i a_i)}_{v_{ii}} \bm{x}_{i} \bm{x}_{i}^\top = \sum_{i=1}^{n} v_{ii} \bm{x}_{i} \bm{x}_{i}^\top \\[10pt] 
&\bm{\ddot{\ell}}_{\rho \beta} = \bm{\ddot{\ell}}_{\beta \rho}^{\top} = \sum_{i=1}^{n} \underbrace{ \left\lbrace \frac{1}{\alpha_i^3} \left( \frac{1}{t_i} - \frac{t_i}{\theta^2_i} \right) \right\rbrace }_{k_i} \underbrace{ \frac{1}{g'_2(\alpha_i)} }_{b_i} \underbrace{ \frac{1}{g'_1(\theta_i)} }_{a_i} \bm{w}_{i} \bm{x}_{i}^\top  \stepcounter{equation}\tag{\theequation} \\
& \myquad[1.4] = \sum_{i=1}^{n} \underbrace{ k_i b_i a_i }_{h_{ii}} \bm{w}_{i} \bm{x}_{i}^\top = \sum_{i=1}^{n} h_{ii} \bm{w}_{i} \bm{x}_{i}^\top \\[10pt]
& \bm{\ddot{\ell}}_{\rho\rho} = \sum_{i=1}^{n} \underbrace{ \left\lbrace \frac{4+3\alpha_i^4}{(2\alpha_i+\alpha_i^3)^2} - \frac{3}{\alpha_i^4} \left( \frac{t_i}{\theta_i} + \frac{\theta_i}{t_i} - 2 \right) \right\rbrace }_{c'_i} \underbrace{ \left[\frac{1}{g'_2(\alpha_i)}\right]^2}_{b_i^2} \bm{w}_{i} \bm{w}_{i}^\top + \\ 
& \myquad[1.4] + \sum_{i=1}^{n} \underbrace{ \left\lbrace \frac{1}{\alpha_i^3} \left(\frac{t_i}{\theta_i} + \frac{\theta_i}{t_i}-2 \right) - \frac{(2 + 3\alpha_i^2)}{2\alpha_i + \alpha_i^3} \right\rbrace }_{c_i} \underbrace{ \left( -\frac{g''_2(\alpha_i)}{[g'_2(\alpha_i)]^2} \right) }_{e_i} \underbrace{ \frac{1}{g'_2(\alpha_i)} }_{b_i} \bm{w}_{i} \bm{w}_{i}^\top \\ 
& \myquad[1.4] = \sum_{i=1}^{n} \underbrace{(c_{i}' b_i^2 + c_i e_i b_i)}_{u_{ii}} \bm{w}_{i} \bm{w}_{i}^\top = \sum_{i=1}^{n} u_{ii} \bm{w}_{i} \bm{w}_{i}^\top \\ 
\end{align*}
where $\bm{\mathcal{Z'}} = \text{diag}(z'_1,\ldots,z'_n)$, $\bm{\mathcal{Z}} = \text{diag}(\bm{z})$, $\bm{\mathcal{C'}} = \text{diag}(c'_1,\ldots,c'_n)$, $\bm{\mathcal{C}} = \text{diag}(\bm{c})$, $\bm{\mathcal{D}} = \text{diag}(d_1,\ldots,d_n)$, $\bm{\mathcal{E}} = \text{diag}(e_1,\ldots,e_n)$, $\bm{\mathcal{K}} = \text{diag}(k_1,\ldots,k_n)$, $\bm{\mathcal{V}} = \bm{\mathcal{Z}}'\bm{\mathcal{A}}^2 + \bm{\mathcal{Z}}\bm{\mathcal{D}}\bm{\mathcal{A}}$, $\bm{\mathcal{H}} = \bm{\mathcal{K}}\bm{\mathcal{B}}\bm{\mathcal{A}}$ and $\bm{\mathcal{U}} = \bm{\mathcal{C}}'\bm{\mathcal{B}}^2 + \bm{\mathcal{C}}\bm{\mathcal{E}}\bm{\mathcal{B}}$. 


\subsection{Initial values}\label{sec:3.2}

The initial value for ${\bm{\beta}} = ({\beta}_1, \dots,{\beta}_p)^{\top}$ can be obtained by the use of least squares method from 
\begin{equation}
\label{eq:betas}
\widehat{\bm{\beta}}_0 = (\bm{X}^{\top}\bm{X})^{-1}\bm{X}^{\top}{g}_1(\bm{t}),
\end{equation}
where $\bm{t} = (t_{1}, \dots, t_{n})^{\top}$ and ${g}_1$ is the link function.

Let $\textstyle y_i = ({t_i\over \widehat{\theta}_i} + {\widehat{\theta}_i\over t_i} - 2)^{1/2}$ for $i=1,\ldots,n$. Then, the initial value for  $\bm{\rho} = (\rho_1, \dots, \rho_q)^{\top}$ can be estimated by ordinary least squares, as
\begin{equation}
\label{eq:rhos}
\widehat{\bm{\rho}}_0 = (\bm{W}^{\top}\bm{W})^{-1}\bm{W}^{\top}{g}_2(\bm{y}),
\end{equation}
where $\widehat{\theta_{i}} = {g}_1^{-1}(\bm{x}_{i}^{\top}\widehat{\bm{\beta}}_0)$, $\bm{y} = (y_{1}, \dots, y_{n})^{\top}$ and ${g}_2$ is the link function.

\subsection{Confidence intervals}\label{sec:3.3}

In this subsection, we derive some confidence intervals (CIs) using the asymptotic properties of maximum likelihood estimators and the bootstrap approach.

\subsubsection{Asymptotic confidence interval}\label{sec:3.3.1}

Based on the asymptotic normal approximation in \eqref{asym:dist}, we have the asymptotic CI (ACI) for the parameter $\delta_j$ as
\begin{equation}
\label{eq:ICnormal}
\big(
\widehat{\delta_j} - z_{1-\frac{\kappa}{2}}{{\bm{\Sigma}}_{{\bm{\delta}}}}_{jj},\,  \widehat{\delta_j} + z_{1-\frac{\kappa}{2}}{{\bm{\Sigma}}_{\bm{\delta}}}_{jj}
\big),
\end{equation}
where $z_{1-\frac{\kappa}{2}}$ is the ${1-\frac{\kappa}{2}}$ quantile of the standard normal distribution, and $j = 1, \dots, p + q$.

\subsubsection{Bootstrap confidence intervals}\label{sec:3.3.2}

The bootstrap approach, developed by \citet{efron1986bootstrap}, provides another way to obtain CIs. We compute the percentile bootstrap CI (PCI) and the bias-corrected and accelerated CI (BCI) for the parameter $\delta_j$, with $j = 1, \dots, p + q$, based on the the following steps: 
\begin{enumerate}[label=(\roman*)]
	\item Use the maximum likelihood method to obtain an estimate $\widehat{\bm\delta} = (\widehat{\bm{\rho}}^{\top}, \widehat{\bm{\beta}}^{\top})^\top$ of $\bm\delta = (\bm{\rho}^{\top}, \bm{\beta}^{\top})^\top$ from the original sample, and then estimate the components $(\widehat\alpha_{i}, \widehat\theta_{i})$, with $i = 1, \dots, n$. 
	
	\item For each bootstrap replica $b=1, \dots, B$:
	\begin{enumerate}
		\item For each $i = 1, \dots, n$, draw a pseudo-random sample $t^{*(b)}_{1}, \dots, t^{*(b)}_{n}$, where $t^{*(b)}_{i} \sim f(t ; \widehat\alpha_{i}, \widehat\theta_{i})$, as defined in \eqref{eq:fdp}.
		
		\item Compute the $b$-th replica of $\widehat{\bm\delta}^{(b)} = (\widehat{\bm{\rho}}^{(b)\top}, \widehat{\bm{\beta}}^{(b)\top})^\top$ by the maximum likelihood method based on the pseudo-sample generated in (a).
	\end{enumerate}  
\end{enumerate}    

\subsection{Residual analysis}\label{sec:3.4}

We perform residuals analysis in order to evaluate the validity of the assumptions of the model and also as tools for model selection. We consider three types of residuals. The first residual is the generalized Cox-Snell (GCS), given by
\begin{equation}
r_i^{GCS} = -\log(\widehat{S}_T(t_i)), \quad i=1,\dots,n,
\end{equation}
where $\widehat{S}_T(t_i) = 1 - \widehat{F}_T(t_i)$ is the survival function fitted to the data. The GCS residual is asymptotically standard exponential, $\textrm{EXP}(1)$ in short, when the model is correctly specified whatever the specification of the model is; 

The second residual is the randomized quantile (RQ), given by
\begin{equation}
r_i^{RQ} = \Phi^{-1}(\widehat{S}_T(t_i)), \quad i=1,\ldots,n,
\end{equation}
where $\Phi^{-1}$ is the inverse function of the  standard normal CDF and $\widehat{S}_T(t_i)$ is the survival function fitted to the data. The RQ residual follows a standard normal distribution when the model is specified correctly, regardless of the LBS model.

The third residual (U) is based on the relationship given by (P5) of Subsection \ref{sec:2.1}. Note that
\begin{equation}
\label{eq:mistura}
r_i^{U} = \frac{1}{{\alpha}_i^2}\biggl(\frac{t_i}{{\theta}_i}+\frac{{\theta}_i}{t_i}-2\biggr), \quad i=1,\ldots,n,
\end{equation}
should result in $r_i^{U}$ being independent and identically distributied observations from a mixture of two gamma distributions; see (P5). Thus, from the given data, we may estimate ${\theta}_i$ and ${\alpha}_i$ by $\widehat{\theta}_i = g_1^{-1}(\bm x_{i}^{\top}\widehat{\bm{\beta}})$ and
$\widehat{\alpha}_i = g_2^{-1}(\bm w_{i}^{\top}\widehat{\bm{\rho}})$, respectively, and use them to determine the $r_i^{U}$ values.

\section{Monte Carlo simulation studies}\label{sec:4}

Three Monte Carlo simulation studies are carried out to evaluate 
the performances of the maximum likelihood estimates, the coverage probabilities of the 95\% CIs and the empirical distribution of the residuals. We use
the \texttt{R} software to do all numerical calculations; see \cite{rmanual:20}. 

The simulation scenario considers sample size $n = 50, 100, 500$. For the $\theta$ component, the values of the true parameters are taken as $\beta_0 = 1$ and $ \beta_1 = -1$. For the $\alpha$ component, we consider the scenario with and without covariates. For the scenario without covariates, $\alpha = {0.25, 0.50, 1.00,2.00, 2.50}$ and for the scenario with covariates, $\rho_0 = -1$ and $\rho_1 = 0.25, 0.75, 1.25$. The covariates $x_{1i}$ and $w_ {1i}$ in the predictors of the \eqref{eq:pred} models were obtained from a uniform distribution in the interval $(-1, 1)$. We use 5,000 Monte Carlo replications for each combination of above given parameters and sample size; we use $R=500$ bootstrap replicates.

\subsection{Maximum likelihood estimates}

The maximum likelihood estimation results for the considered LBS regresion model are presented in Tables \ref{tab:1} and \ref{tab:2}. 
The empirical mean, bias and mean squared error (MSE) are reported. The results of Tables \ref{tab:1} and \ref{tab:2} allow us to conclude that, when the sample size increases, the empirical means tend to the reference true parameter values. Moreover, the empirical bias and MSE both decrease, as expected, when the sample size increases.

\begin{table}[!ht]
\centering
\small
	\caption{Empirical mean, bias and MSE from simulated data for the indicated maximum likelihood estimates of the LBS regression model parameters with covariates in $\alpha$.}
	\label{tab:1}
	\begin{tabular}{@{}lrrrrrrrrr@{}}
		\toprule
		\multirow{3}{*}{True value} & \multicolumn{3}{l}{Mean}    & \multicolumn{3}{l}{Bias}    & \multicolumn{3}{l}{MSE}  \\ \cmidrule(l){2-4} \cmidrule(l){5-7} \cmidrule(l){8-10} 
		& \multicolumn{3}{l}{$n$}       & \multicolumn{3}{l}{$n$}       & \multicolumn{3}{l}{$n$}    \\ \cmidrule(l){2-4} \cmidrule(l){5-7} \cmidrule(l){8-10}
		& \multicolumn{1}{l}{50} & \multicolumn{1}{l}{100} & \multicolumn{1}{l}{500} & \multicolumn{1}{l}{50} & \multicolumn{1}{l}{100} & \multicolumn{1}{l}{500} & \multicolumn{1}{l}{50} & \multicolumn{1}{l}{100} & \multicolumn{1}{l}{500} \\ \midrule
		$\beta_0=1$                          & 1.0049  & 1.0030  & 1.0005  & 0.0049  & 0.0030  & 0.0005  & 0.0032 & 0.0016 & 0.0003 \\
		$\beta_1=-1$                         & -1.0009 & -0.9991 & -1.0006 & -0.0009 & 0.0009  & -0.0006 & 0.0086 & 0.0038 & 0.0007 \\
		$\rho_0=-1$                         & -1.0438 & -1.0221 & -1.0042 & -0.0438 & -0.0221 & -0.0042 & 0.0128 & 0.0056 & 0.0010 \\
		$\rho_1=0.25$                       & 0.2691  & 0.2550  & 0.2517  & 0.0191  & 0.0050  & 0.0017  & 0.0280 & 0.0138 & 0.0025 \\ \midrule
		$\beta_0=1$                          & 1.0027  & 1.0020  & 1.0001  & 0.0027  & 0.0020  & 0.0001  & 0.0021 & 0.0011 & 0.0002 \\
		$\beta_1=-1$                         & -1.0011 & -0.9996 & -1.0006 & -0.0011 & 0.0004  & -0.0006 & 0.0056 & 0.0028 & 0.0005 \\
		$\rho_0=-1$                         & -1.0424 & -1.0215 & -1.0038 & -0.0424 & -0.0215 & -0.0038 & 0.0117 & 0.0052 & 0.0009 \\
		$\rho_1=0.75$                       & 0.7869  & 0.7639  & 0.7533  & 0.0369  & 0.0139  & 0.0033  & 0.0270 & 0.0129 & 0.0024 \\ \midrule
		$\beta_0=1$                          & 1.0012  & 1.0012  & 1.0001  & 0.0012  & 0.0012  & 0.0001  & 0.0010 & 0.0006 & 0.0001 \\
		$\beta_1=-1$                         & -1.0011 & -1.0000 & -1.0005 & -0.0011 & 0.0000  & -0.0005 & 0.0029 & 0.0016 & 0.0003 \\
		$\rho_0=-1$                         & -1.0413 & -1.0210 & -1.0038 & -0.0413 & -0.0210 & -0.0038 & 0.0105 & 0.0047 & 0.0008 \\
		$\rho_1=1.25$                       & 1.2955  & 1.2695  & 1.2543  & 0.0455  & 0.0195  & 0.0043  & 0.0254 & 0.0119 & 0.0022 \\ \bottomrule
	\end{tabular}
\end{table}

From Table \ref{tab:2}, we observe that that when $\alpha$ assumes values greater than 2, which is the bimodal case (Theorem 2.2), the MSE values are greater for $\beta_{0}$ e for $\rho_{0} = \ln(2) \ \text{or} \ \ln(2.5)$, but not for $\beta_{1}$. This result shows that when bimodality is present, the bias and MSE tend to be greater than the unimodal case. However, as the sample size increases, the bias and MSE decrease dramatically.

\begin{table}[!ht]
\centering
\small
	\caption{Empirical mean, bias and MSE from simulated data for the indicated maximum likelihood estimates of the LBS regression model parameters without covariates in $\alpha$.}
	\label{tab:2}
	\begin{tabular}{@{}lrrrrrrrrr@{}}
		\toprule
		\multirow{3}{*}{True value} & \multicolumn{3}{l}{Mean}    & \multicolumn{3}{l}{Bias}    & \multicolumn{3}{l}{MSE}  \\ \cmidrule(l){2-4} \cmidrule(l){5-7} \cmidrule(l){8-10} 
		& \multicolumn{3}{l}{$n$}       & \multicolumn{3}{l}{$n$}       & \multicolumn{3}{l}{$n$}    \\ \cmidrule(l){2-4} \cmidrule(l){5-7} \cmidrule(l){8-10}
		& \multicolumn{1}{l}{50} & \multicolumn{1}{l}{100} & \multicolumn{1}{l}{500} & \multicolumn{1}{l}{50} & \multicolumn{1}{l}{100} & \multicolumn{1}{l}{500} & \multicolumn{1}{l}{50} & \multicolumn{1}{l}{100} & \multicolumn{1}{l}{500} \\ \midrule
		$\beta_0=1$                            & 1.0028                 & 1.0017                  & 1.0003                  & 0.0028                 & 0.0017                  & 0.0003                  & 0.0014                 & 0.0007                  & 0.0001                  \\
		$\beta_1=-1$                          & -0.9992                & -0.9991                 & -1.0002                 & 0.0008                 & 0.0009                  & -0.0002                 & 0.0042                 & 0.0018                  & 0.0004                  \\
		$\rho_0=\ln(0.25)$                        & -1.4180                & -1.4035                 & -1.3904                 & -0.0317                & -0.0173                 & -0.0041                 & 0.0120                 & 0.0053                  & 0.0010                  \\ \midrule
		$\beta_0=1$                            & 1.0106                 & 1.0064                  & 1.0015                  & 0.0106                 & 0.0064                  & 0.0015                  & 0.0070                 & 0.0035                  & 0.0007                  \\
		$\beta_1=-1$                          & -0.9990                & -0.9983                 & -1.0003                 & 0.0010                 & 0.0017                  & -0.0003                 & 0.0158                 & 0.0067                  & 0.0013                  \\
		$\rho_0=\ln(0.50)$                        & -0.7263                & -0.7112                 & -0.6975                 & -0.0331                & -0.0181                 & -0.0044                 & 0.0130                 & 0.0058                  & 0.0011                  \\ \midrule
		$\beta_0=1$                            & 1.0316                 & 1.0233                  & 1.0058                  & 0.0316                 & 0.0233                  & 0.0058                  & 0.1290                 & 0.0274                  & 0.0057                  \\
		$\beta_1=-1$                          & -1.0010                & -0.9976                 & -1.0007                 & -0.0010                & 0.0024                  & -0.0007                 & 0.0429                 & 0.0180                  & 0.0036                  \\
		$\rho_0=\ln(1.00)$                        & -0.0389                & -0.0235                 & -0.0058                 & -0.0389                & -0.0235                 & -0.0058                 & 0.0430                 & 0.0113                  & 0.0023                  \\ \midrule
		$\beta_0=1$                            & 1.1071                 & 0.9653                  & 0.9975                  & 0.1071                 & -0.0347                 & -0.0025                 & 2.1341                 & 1.7120                  & 0.1965                  \\
		$\beta_1=-1$                          & -1.0034                & -0.9977                 & -1.0008                 & -0.0034                & 0.0023                  & -0.0008                 & 0.0551                 & 0.0230                  & 0.0046                  \\
		$\rho_0=\ln(2.00)$                        & 0.6240                 & 0.7026                  & 0.6925                  & -0.0692                & 0.0094                  & -0.0007                 & 0.5316                 & 0.4185                  & 0.0469                  \\ \midrule
		$\beta_0=1$                            & 1.3516                 & 1.0837                  & 0.9920                  & 0.3516                 & 0.0837                  & -0.0080                 & 2.8662                 & 2.4906                  & 0.3549                  \\
		$\beta_1=-1$                          & -1.0034                & -0.9976                 & -1.0008                 & -0.0034                & 0.0024                  & -0.0008                 & 0.0540                 & 0.0225                  & 0.0045                  \\
		$\rho_0=\ln(2.50)$                        & 0.7309                 & 0.8710                  & 0.9193                  & -0.1854                & -0.0453                 & 0.0030                  & 0.7141                 & 0.6094                  & 0.0840                  \\ \bottomrule
	\end{tabular}
\end{table}

\subsection{Coverage probabilities}

Tables \ref{tab:3} and \ref{tab:4} present the coverage probabilities of 95\% CIs presented Subsection \ref{sec:3.3} for the LBS regression model. The results show that the ACI, PCI and BCI coverage probabilities approach the nominal level of 95\% when the sample increases, as expected. However, when $\alpha$ takes values greater than 2, the ACI, PCI and BCI coverage probabilities are lower than the corresponding nominal value for $\beta_{0}=1$ and for $\rho_{0}=\ln(2) \ \text{or} \ \ln(2.5)$. In general, the BCI has the best performance, which might be due to its characteristics as this method corrects for bias and skewness in the distribution of bootstrap estimates.

\begin{table}[!ht]
\centering
\small
	\caption{Empirical coverage probabilities of 95\% CIS for the LBS regression model with covariates in $\alpha$.}
	\label{tab:3}
	\begin{tabular}{@{}lrrrrrrrrr@{}}
		\toprule
		\multirow{3}{*}{True value} & \multicolumn{3}{l}{ACI}                                                 & \multicolumn{3}{l}{PCI}                                              & \multicolumn{3}{l}{BCI}                                                    \\ \cmidrule(l){2-10} 
		& \multicolumn{3}{l}{$n$}       & \multicolumn{3}{l}{$n$}       & \multicolumn{3}{l}{$n$}    \\ \cmidrule(l){2-4} \cmidrule(l){5-7} \cmidrule(l){8-10}
		& \multicolumn{1}{l}{50} & \multicolumn{1}{l}{100} & \multicolumn{1}{l}{500} & \multicolumn{1}{l}{50} & \multicolumn{1}{l}{100} & \multicolumn{1}{l}{500} & \multicolumn{1}{l}{50} & \multicolumn{1}{l}{100} & \multicolumn{1}{l}{500} \\ \midrule
		$\beta_0=1$                           & 92.44                & 93.96                 & 94.70                 & 91.78                & 93.18                 & 94.06                 & 91.82                & 93.24                 & 93.96                 \\
		$\beta_1=-1$                          & 92.44                & 94.10                 & 95.04                 & 92.10                & 93.80                 & 94.54                 & 92.12                & 93.88                 & 94.48                 \\
		$\rho_0=-1$                          & 91.26                & 93.94                 & 95.16                 & 84.58                & 89.26                 & 93.96                 & 83.16                & 88.68                 & 93.60                 \\
		$\rho_1=0.25$                        & 93.12                & 94.50                 & 94.58                 & 92.78                & 94.12                 & 94.38                 & 92.80                & 93.98                 & 94.32                 \\ \midrule
		$\beta_0=1$                           & 92.24                & 93.78                 & 94.78                 & 91.66                & 93.08                 & 94.16                 & 91.68                & 93.06                 & 94.02                 \\
		$\beta_1=-1$                          & 91.60                & 94.12                 & 95.16                 & 91.32                & 93.78                 & 94.92                 & 91.30                & 93.82                 & 94.82                 \\
		$\rho_0=-1$                          & 91.36                & 93.46                 & 94.88                 & 83.86                & 89.34                 & 94.02                 & 82.26                & 88.74                 & 93.66                 \\
		$\rho_1=0.75$                        & 92.72                & 94.34                 & 94.48                 & 91.26                & 93.86                 & 94.14                 & 90.88                & 93.74                 & 93.82                 \\ \midrule
		$\beta_0=1$                           & 92.00                & 93.46                 & 94.76                 & 91.60                & 93.04                 & 94.10                 & 91.62                & 92.74                 & 94.12                 \\
		$\beta_1=-1$                          & 91.32                & 93.80                 & 95.08                 & 91.06                & 93.66                 & 94.72                 & 90.90                & 93.44                 & 94.54                 \\
		$\rho_0=-1$                          & 91.60                & 93.30                 & 94.76                 & 83.36                & 88.80                 & 93.78                 & 81.66                & 88.12                 & 93.70                 \\
		$\rho_1=1.25$                        & 92.68                & 94.02                 & 94.52                 & 89.50                & 92.50                 & 93.94                 & 89.04                & 92.56                 & 93.82                 \\ \bottomrule
	\end{tabular}
\end{table}

\begin{table}[!ht]
\centering
\small
	\caption{Empirical coverage probabilities of 95\% CIS for the LBS regression model without covariates in $\alpha$.}
	\label{tab:4}
	\begin{tabular}{@{}lrrrrrrrrr@{}}
		\toprule
		\multirow{3}{*}{True value} & \multicolumn{3}{l}{ACI}                                                 & \multicolumn{3}{l}{PCI}                                              & \multicolumn{3}{l}{BCI}                                                    \\ \cmidrule(l){2-10} 
		& \multicolumn{3}{l}{$n$}       & \multicolumn{3}{l}{$n$}       & \multicolumn{3}{l}{$n$}    \\ \cmidrule(l){2-4} \cmidrule(l){5-7} \cmidrule(l){8-10}
		& \multicolumn{1}{l}{50} & \multicolumn{1}{l}{100} & \multicolumn{1}{l}{500} & \multicolumn{1}{l}{50} & \multicolumn{1}{l}{100} & \multicolumn{1}{l}{500} & \multicolumn{1}{l}{50} & \multicolumn{1}{l}{100} & \multicolumn{1}{l}{500} \\ \midrule
		$\beta_0=1$                           & 93.96                & 94.98                 & 94.84                 & 93.22                & 94.44                 & 94.56                 & 93.28                & 94.56                 & 94.60                 \\
		$\beta_1=-1$                          & 93.66                & 94.88                 & 95.34                 & 93.32                & 94.42                 & 95.08                 & 93.28                & 94.40                 & 95.04                 \\
		$\rho_0=\ln(0.25)$                        & 92.74                & 94.44                 & 94.46                 & 88.72                & 92.12                 & 94.06                 & 87.80                & 91.54                 & 93.80                 \\ \midrule
		$\beta_0=1$                           & 93.36                & 94.28                 & 94.62                 & 92.26                & 93.36                 & 94.10                 & 92.32                & 93.48                 & 94.12                 \\
		$\beta_1=-1$                          & 93.70                & 94.86                 & 95.40                 & 93.20                & 94.60                 & 95.04                 & 93.28                & 94.34                 & 94.88                 \\
		$\rho_0=\ln(0.50)$                        & 92.66                & 94.42                 & 94.44                 & 88.28                & 91.70                 & 93.68                 & 87.34                & 91.16                 & 93.42                 \\ \midrule
		$\beta_0=1$                           & 90.32                & 92.62                 & 94.44                 & 88.58                & 91.88                 & 93.68                 & 88.18                & 92.14                 & 93.66                 \\
		$\beta_1=-1$                          & 94.04                & 94.80                 & 95.08                 & 93.76                & 94.46                 & 94.88                 & 93.64                & 94.44                 & 94.62                 \\
		$\rho_0=\ln(1.00)$                        & 90.58                & 93.10                 & 94.22                 & 86.46                & 90.46                 & 93.44                 & 85.48                & 90.32                 & 93.36                 \\ \midrule
		$\beta_0=1$                           & 75.32                & 81.40                 & 90.18                 & 77.62                & 84.06                 & 91.12                 & 81.98                & 87.08                 & 91.46                 \\
		$\beta_1=-1$                          & 94.24                & 94.78                 & 94.90                 & 93.88                & 94.50                 & 94.58                 & 93.86                & 94.38                 & 94.44                 \\
		$\rho_0=\ln(2.00)$                        & 75.62                & 81.60                 & 90.22                 & 77.20                & 83.42                 & 90.72                 & 81.00                & 86.08                 & 91.08                 \\ \midrule
		$\beta_0=1$                           & 66.58                & 73.58                 & 87.70                 & 74.38                & 82.10                 & 91.38                 & 81.42                & 87.44                 & 93.14                 \\
		$\beta_1=-1$                          & 94.40                & 94.94                 & 95.04                 & 94.14                & 94.66                 & 94.68                 & 94.00                & 94.54                 & 94.60                 \\
		$\rho_0=\ln(2.50)$                        & 67.32                & 73.72                 & 87.72                 & 74.14                & 81.90                 & 91.40                 & 80.38                & 86.98                 & 93.02                 \\ \bottomrule
	\end{tabular}
\end{table}

\subsection{Empirical distribution of residuals}

We now present the Monte Carlo simulation results for evaluating the performance of the $r^{\textrm{GCS}}$, $r^{\textrm{RQ}}$ and $r^{\textrm{U}}$ residuals. 
Tables \ref{tab:5} and \ref{tab:6} presents the empirical mean, standard deviation (SD), coefficient of skewness (CS) and coefficient of kurtosis (CK), whose values are expected to be as in Table \ref{tab:residuals:ref}, for the $r^{\textrm{GCS}}$, $r^{\textrm{RQ}}$ and $r^{\textrm{U}}$ residuals.  
From Tables \ref{tab:5} and \ref{tab:6}, we note that as the sample size increases, the values of the empirical mean, SD, CS and CK approach these values
of the reference distributions shown in Table \ref{tab:residuals:ref}. Therefore, the considered residuals conform well with the reference distributions.

\begin{table}[!ht]
\centering
\small
	\caption{Measures of the $r^{\textrm{GCS}}$, $r^{\textrm{RQ}}$ and $r^{\textrm{U}}$ residuals.}
	\label{tab:residuals:ref}
	\begin{tabular}{@{}lccc@{}}
		\toprule
		Measure         & $r^{\textrm{GCS}}$ & $r^{\textrm{RQ}}$ & $r^{\textrm{U}}$ \\ \midrule  
		Mean     &   1       &    0     & $3-\dfrac{4}{\alpha^2+2}$ \\[0.7cm]
		SD 		 &   1       &    1     & $\sqrt{6-\dfrac{16}{(\alpha^2+2)^2}}$ \\[0.7cm]
		CS 		 &   2       &    0     & $\dfrac{8(3\alpha^6 + 18\alpha^4 + 36\alpha^2 + 8)}{(6\alpha^4 + 24\alpha^2 + 8)^{3/2}}$ \\[0.7cm]
		CK       &   9       &    3     & $\dfrac{12(21\alpha^8 + 168\alpha^6 + 456\alpha^4 + 480\alpha^2 + 80)}{(6\alpha^4 + 24\alpha^2 + 8)^{2}}$   \\[0.3cm] \bottomrule
	\end{tabular}
\end{table}

%
%

\begin{table}[!ht]
\centering
\small
	\caption{Summary statistics for the $r^{\textrm{GCS}}$, $r^{\textrm{RQ}}$ and $r^{\textrm{U}}$ residuals with covariates in $\alpha$ ($\beta_0=1$, $\beta_1=-1$, $\rho_0=-1$).}
	\label{tab:5}
	\begin{tabular}{@{}lrrrrrrrrr@{}}
		\toprule
		\multirow{3}{*}{Statistic} & \multicolumn{3}{l}{$r^{\textrm{GCS}}$}                                 & \multicolumn{3}{l}{$r^{\textrm{RQ}}$}                                  & \multicolumn{3}{l}{$r^U$}                                  \\ \cmidrule(l){2-4} \cmidrule(l){5-7} \cmidrule(l){8-10}
		& \multicolumn{3}{l}{$n$}                                     & \multicolumn{3}{l}{$n$}                                      & \multicolumn{3}{l}{$n$}                                      \\ \cmidrule(l){2-4} \cmidrule(l){5-7} \cmidrule(l){8-10}
		& \multicolumn{1}{l}{50} & \multicolumn{1}{l}{100} & \multicolumn{1}{l}{500} & \multicolumn{1}{l}{50} & \multicolumn{1}{l}{100} & \multicolumn{1}{l}{500} & \multicolumn{1}{l}{50} & \multicolumn{1}{l}{100} & \multicolumn{1}{l}{500} \\ \midrule
		& \multicolumn{9}{l}{$\rho_1=0.25$} \\ \midrule
		Mean                       & 0.9999                 & 1.0000                  & 0.9999                  & -0.0011                & -0.0002                 & -0.0001                 & 1.1256                 & 1.1270                  & 1.1307                  \\
		SD                         & 0.9775                 & 0.9904                  & 0.9969                  & 1.0092                 & 1.0048                  & 1.0009                  & 1.4916                 & 1.5369                  & 1.5728                  \\
		CS                         & 1.5475                 & 1.7426                  & 1.9310                  & 0.0378                 & 0.0123                  & 0.0058                  & 2.0560                 & 2.3326                  & 2.6393                  \\
		CK                         & 5.6464                 & 6.8645                  & 8.3488                  & 2.7956                 & 2.8948                  & 2.9781                  & 7.7884                 & 9.8548                  & 12.8408                 \\ \midrule
		& \multicolumn{9}{l}{$\rho_1=0.75$}                                                                                                                                                                                                             \\ \midrule
		Mean                       & 1.0012                 & 1.0006                  & 1.0004                  & -0.0035                & -0.0013                 & -0.0008                 & 1.1640                 & 1.1602                  & 1.1659                  \\
		SD                         & 0.9771                 & 0.9904                  & 0.9970                  & 1.0064                 & 1.0037                  & 1.0005                  & 1.5345                 & 1.5755                  & 1.6134                  \\
		CS                         & 1.5528                 & 1.7473                  & 1.9316                  & 0.0324                 & 0.0101                  & 0.0052                  & 2.0458                 & 2.3172                  & 2.6102                  \\
		CK                         & 5.6805                 & 6.8971                  & 8.3551                  & 2.7995                 & 2.8997                  & 2.9790                  & 7.7461                 & 9.7546                  & 12.5849                 \\ \midrule
		& \multicolumn{9}{l}{$\rho_1=1.25$}                                                                                                                                                                                                             \\ \midrule
		Mean                       & 1.0019                 & 1.0010                  & 1.0004                  & -0.0051                & -0.0020                 & -0.0008                 & 1.2384                 & 1.2286                  & 1.2359                  \\
		SD                         & 0.9770                 & 0.9906                  & 0.9970                  & 1.0040                 & 1.0026                  & 1.0004                  & 1.6149                 & 1.6505                  & 1.6894                  \\
		CS                         & 1.5618                 & 1.7542                  & 1.9334                  & 0.0275                 & 0.0074                  & 0.0048                  & 2.0176                 & 2.2776                  & 2.5478                  \\
		CK                         & 5.7339                 & 6.9420                  & 8.3706                  & 2.8089                 & 2.9064                  & 2.9803                  & 7.6149                 & 9.5076                  & 12.0720                 \\ \bottomrule
	\end{tabular}
\end{table}

\begin{table}[!ht]
\centering
\small
	\caption{Summary statistics for the $r^{\textrm{GCS}}$, $r^{\textrm{RQ}}$ and $r^{\textrm{U}}$ residuals without covariates in $\alpha$ ($\beta_0=1$, $\beta_1=-1$).}
	\label{tab:6}
	\begin{tabular}{@{}lrrrrrrrrr@{}}
		\toprule
		\multirow{3}{*}{Statistic} & \multicolumn{3}{l}{$r^{\textrm{GCS}}$}                                 & \multicolumn{3}{l}{$r^{\textrm{RQ}}$}                                  & \multicolumn{3}{l}{$r^U$}                                  \\ \cmidrule(l){2-4} \cmidrule(l){5-7} \cmidrule(l){8-10}
		& \multicolumn{3}{l}{$n$}                                     & \multicolumn{3}{l}{$n$}                                      & \multicolumn{3}{l}{$n$}                                      \\ \cmidrule(l){2-4} \cmidrule(l){5-7} \cmidrule(l){8-10}
		& \multicolumn{1}{l}{50} & \multicolumn{1}{l}{100} & \multicolumn{1}{l}{500} & \multicolumn{1}{l}{50} & \multicolumn{1}{l}{100} & \multicolumn{1}{l}{500} & \multicolumn{1}{l}{50} & \multicolumn{1}{l}{100} & \multicolumn{1}{l}{500} \\ \midrule
		& \multicolumn{9}{l}{$\rho_0=\ln(0.25)$}\\ \midrule
		Mean                       & 1.0013                 & 1.0011                  & 1.0005 & -0.0013                & -0.0012                 & -0.0006 & 1.0581                 & 1.0591                  & 1.0602  \\
		SD                         & 0.9964                 & 0.9975                  & 0.9994 & 1.0099                 & 1.0048                  & 1.0009  & 1.4512                 & 1.4703                  & 1.4893  \\
		CS                         & 1.6539                 & 1.7915                  & 1.9457 & -0.0011                & -0.0004                 & -0.0006 & 2.1920                 & 2.4287                  & 2.7033  \\
		CK                         & 6.1768                 & 7.1628                  & 8.4571 & 2.8832                 & 2.9342                  & 2.9822  & 8.5821                 & 10.5504                 & 13.4125 \\ \midrule
		& \multicolumn{9}{l}{$\rho_0=\ln(0.50)$}                                                                                                                                                            \\ \midrule
		Mean                       & 1.0003                 & 1.0002                  & 1.0000 & 0.0001                 & 0.0001                  & 0.0000  & 1.2130                 & 1.2168                  & 1.2208  \\
		SD                         & 0.9946                 & 0.9963                  & 0.9989 & 1.0102                 & 1.0051                  & 1.0010  & 1.6337                 & 1.6556                  & 1.6774  \\
		CS                         & 1.6453                 & 1.7866                  & 1.9453 & 0.0038                 & 0.0024                  & 0.0004  & 2.1283                 & 2.3365                  & 2.5793  \\
		CK                         & 6.1364                 & 7.1363                  & 8.4548 & 2.8744                 & 2.9293                  & 2.9813  & 8.2144                 & 9.9125                  & 12.3395 \\ \midrule
		& \multicolumn{9}{l}{$\rho_0=\ln(1.00)$}                                                                                                                                                            \\ \midrule
		Mean                       & 0.9998                 & 0.9999                  & 1.0000 & 0.0017                 & 0.0008                  & 0.0002  & 1.6375                 & 1.6492                  & 1.6622  \\
		SD                         & 0.9897                 & 0.9931                  & 0.9981 & 1.0113                 & 1.0056                  & 1.0011  & 1.9901                 & 2.0168                  & 2.0452  \\
		CS                         & 1.6113                 & 1.7636                  & 1.9391 & 0.0239                 & 0.0151                  & 0.0036  & 1.8374                 & 1.9877                  & 2.1610  \\
		CK                         & 5.9778                 & 7.0118                  & 8.4133 & 2.8428                 & 2.9130                  & 2.9777  & 6.7649                 & 7.8887                  & 9.3977  \\ \midrule
		& \multicolumn{9}{l}{$\rho_0=\ln(2.00)$}                                                                                                                                                            \\ \midrule
		Mean                       & 0.9977                 & 0.9986                  & 0.9997 & 0.0126                 & 0.0077                  & 0.0018  & 2.1833                 & 2.2628                  & 2.3194  \\
		SD                         & 0.9832                 & 0.9895                  & 0.9972 & 1.0274                 & 1.0158                  & 1.0036  & 2.2428                 & 2.2937                  & 2.3419  \\
		CS                         & 1.5768                 & 1.7413                  & 1.9332 & 0.0953                 & 0.0576                  & 0.0132  & 1.5351                 & 1.6346                  & 1.7486  \\
		CK                         & 5.8354                 & 6.9030                  & 8.3785 & 2.8490                 & 2.9029                  & 2.9698  & 5.5880                 & 6.3225                  & 7.2565  \\ \midrule
		& \multicolumn{9}{l}{$\rho_0=\ln(2.50)$}                                                                                                                                                            \\ \midrule
		Mean                       & 0.9972                 & 0.9982                  & 0.9996 & 0.0106                 & 0.0060                  & 0.0017  & 2.2528                 & 2.3701                  & 2.4905  \\
		SD                         & 0.9751                 & 0.9830                  & 0.9959 & 1.0262                 & 1.0147                  & 1.0039  & 2.2443                 & 2.3079                  & 2.3808  \\
		CS                         & 1.5659                 & 1.7324                  & 1.9311 & 0.1321                 & 0.0872                  & 0.0224  & 1.5002                 & 1.5872                  & 1.6827  \\
		CK                         & 5.8005                 & 6.8689                  & 8.3697 & 2.9153                 & 2.9578                  & 2.9848  & 5.4820                 & 6.1538                  & 6.9842  \\ \bottomrule
	\end{tabular}
\end{table}

\clearpage

\section{Application to real data}\label{sec:5}

In this section, the LBS regression model is illustrated using data from the Meteorological Database for Teaching and Research (BDMEP) for the years 2011-2016, from the Brazilian National Institute of Meteorology (INMET) (\href{https://bdmep.inmet.gov.br/}{Source: INMET Network Data}). The dependent variable ($t_i$) is water evaporation $(\textit{mm})$. The covariates considered in the study were: $x_{i1}$ is the actual evapotranspiration (\textit{mm}); $x_{i2}$ is the total insolation (\textit{h}); $x_{i3}$ is the cloudiness (tenths); and $x_{i4}$ is the relative humidity (\%). The evaporation is measured by the piche evaporimeter, all monthly averages, observed at a monitoring station located in Bras\'ilia, Brazil. Other environmental variables were considered in an initial analysis, however only the afore-mentioned variables were considered statistically significant, at the 5\% level of significance, thus remaining in the final model adopted in this application.

Regarding the dependent variable, water evaporation, the actual evapotranspiration, cloudiness and relative humidity variables have a negative correlation of -0.55 and -0.74, -0.97, respectively, while the total insolation variable has a positive correlation (0.77). Therefore, in general, the lower the levels of actual evapotranspiration, cloudiness and relative humidity in the environment, the greater the water evaporation. On the other hand, the greater the total insolation, the greater the amount of water evaporation. These results are in line with what was expected in the environment.

{Table~\ref{tab:summary} reports descriptive statistics of the observed water evaporation, including the minimum, median, mean, maximum, SD, coefficient of variation (CV), CS and CK values. From this table, we observe a skewed and high 
kurtosis features in the data.}

\begin{table}[!ht]
	\centering
	\small
	\caption{Summary statistics for the water evaporation data.}
	\label{tab:summary}
	\begin{tabular}{lrrrrrrrrr}
	\toprule
	Variable	&	$n$	&	Min. &	Median	&	Mean	&	Max.	&	SD	&	CV (\%)	&	CS	&	CK \\ 
	\midrule
	Water evaporation	&	70	&	65.3	&	138.55	&	156.88	&	303.80	&	66.68	&	42.50	&	0.73	&	-0.67 \\
	\bottomrule
	\end{tabular}
\end{table}

Figure~\ref{fig:density_boxplot} presents an estimated density superimposed on the histogram and boxplots for the water evaporation data. The adjusted boxplot for the water evaporation data indicates that some outliers are not identified by the usual boxplot; see Figure~\ref{fig:density_boxplot}(right). The adjusted boxplot is used when the data is skew distributed; see \cite{hvv:08}.  Note that the skewness observed in Table~\ref{tab:summary} is confirmed by the histogram presented in Figure~\ref{fig:density_boxplot}(left); this figure also indicates bimodality. Thus, the LBS regression model seems to be appropriate to describe these data.

\begin{figure}[!ht]
	\centering
	\includegraphics[scale=0.8]{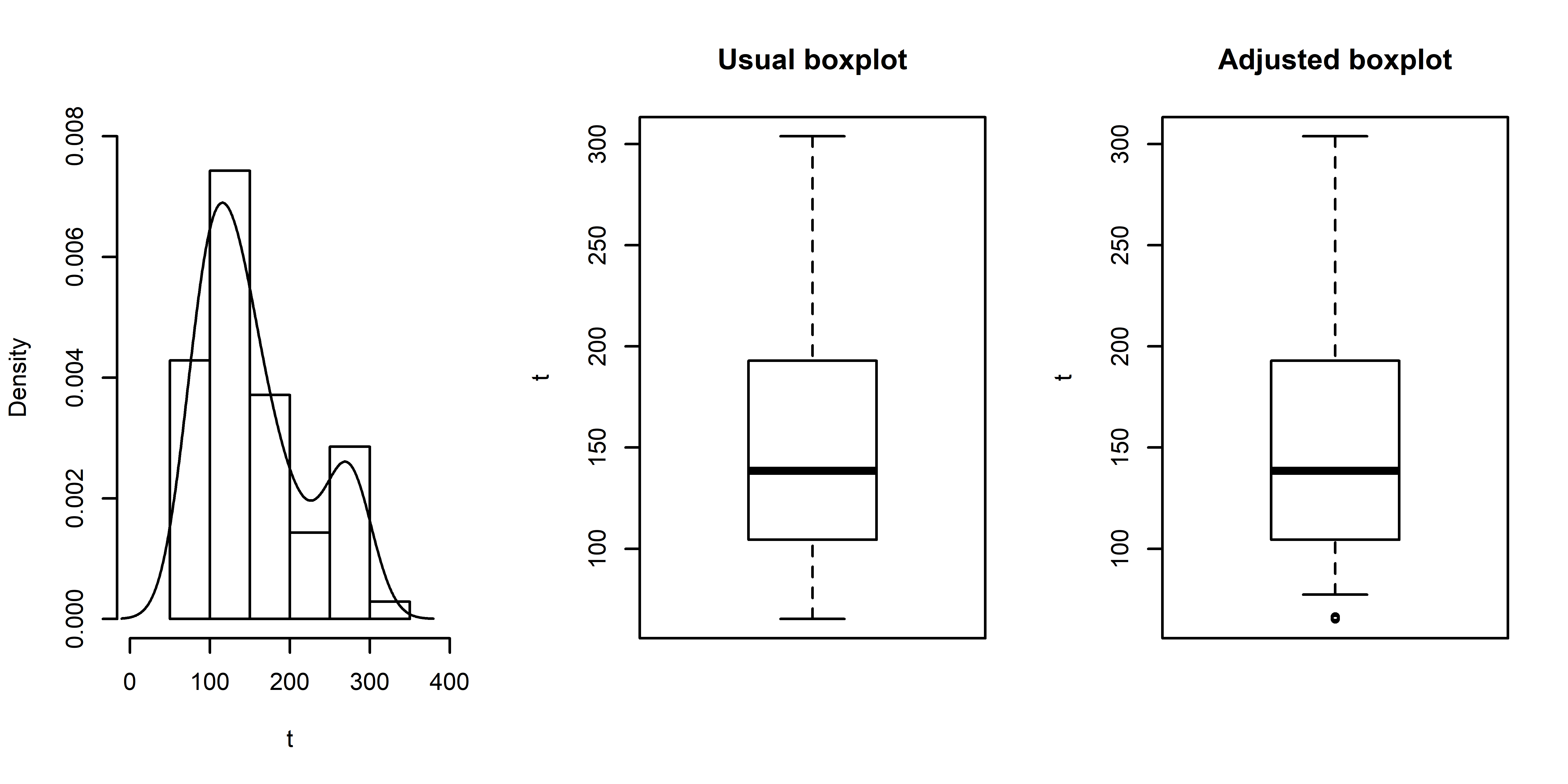}
	\caption{Estimated density superimposed on the histogram and boxplots for the water evaporation data.}\label{fig:density_boxplot}
\end{figure}

We analyze the water evaporation data using the LBS regression model, expressed as
\begin{equation}\label{eq:model}
\begin{aligned}
&\ln(\theta_i) = \beta_0 + \beta_1 x_{1i} + \beta_2 x_{2i} + \beta_3 x_{3i} + \beta_4 x_{4i},\\
&\ln(\alpha_i) = \rho_0 + \rho_1 x_{2i} + \rho_2 x_{3i}.
\end{aligned}
\end{equation}

Table \ref{tab:MLE_CI} reports the maximum likelihood estimates, computed by the BFGS quasi-Newton method, 
SEs and 95\% CI estimates. Note that the 95\% CIs do not include the null value (except the intercept for the $\alpha$ component), then the coefficients are statistically significant.

\begin{table}[!ht]
	\centering
	\small
	\caption{Point and interval estimates for the LBS regression model for the water evaporation data.}
	\label{tab:MLE_CI}
	\begin{tabular}{@{}lrcccc@{}}
	\toprule
	& Point Estimation  && \multicolumn{3}{c}{Interval estimation}     \\ \cline{2-2} \cline{4-6}  
		& Estimates    && ACI        & PCI & BCI                \\ \midrule
		$\theta$ components    &         &                    &                      &                    \\
		(Intercept) & 6.7430 (0.1723)  && (6.4052, 7.0808)   & (6.4009, 7.0722)     & (6.3994, 7.0692)   \\
		Actual evapotranspiration ($x_{1i}$) & 0.0015 (0.0004)  && (0.0007, 0.0023)   & (0.0008, 0.0023)     & (0.0007, 0.0023)   \\
		Total insolation ($x_{2i}$) & 0.0011 (0.0005)  && (0.0002, 0.0020)   & (0.0002, 0.0020)     & (0.0002, 0.0020)   \\
		Cloudiness ($x_{3i}$) & 0.0434 (0.0152)  && (0.0136, 0.0732)   & (0.0166, 0.0742)     & (0.0178, 0.0744)   \\
		Relative humidity ($x_{4i}$) & -0.0366 (0.0010)  && (-0.0386, -0.0346) & (-0.0386, -0.0345)   & (-0.0387, -0.0346) \\ \midrule
		$\alpha$ components    &         &                    &                      &                    \\
		(Intercept) & 1.0396 (1.4433)  && (-1.7892, 3.8683)  & (-2.7158, 4.8980)    & (-2.8676, 4.7398)  \\
		Total insolation  ($x_{2i}$) & -0.0130 (0.0042)  &&(-0.0212, -0.0048) & (-0.0250, -0.0028)   & (-0.0247, -0.0026) \\
		Cloudiness ($x_{3i}$) & -0.2324 (0.1117)  && (-0.4513, -0.0135) & (-0.5124, 0.0430)    & (-0.5076, 0.0465)  \\ \bottomrule
	\end{tabular}
\end{table}

Figure~\ref{fig:envelopes} displays the quantile versus quantile (QQ) plots with simulated envelope
of the $r^{\textrm{GCS}}$, $r^{\textrm{RQ}}$ and $r^{\textrm{U}}$ residuals for the LBS regression model. This figure indicates that these residuals in the LBS regression model show good agreements with the expected distributions. Figure \ref{fig:res_pred} plots  the residuals against the predicted values. Note that this figure shows random patterns, indicating a good fit for the LBS regression model. In addition, the Ljung-Box test results for up to 4th and 16th order serial correlations provide no evidence of serial correlation in the raw residuals ($t_i-\widehat{t}_i$, $i=1,\ldots,n$), with $p$-values equal to $0.6843$ and $0.4048$, respectively.

\begin{figure}[!ht]
	\centering
	\includegraphics[scale=0.8]{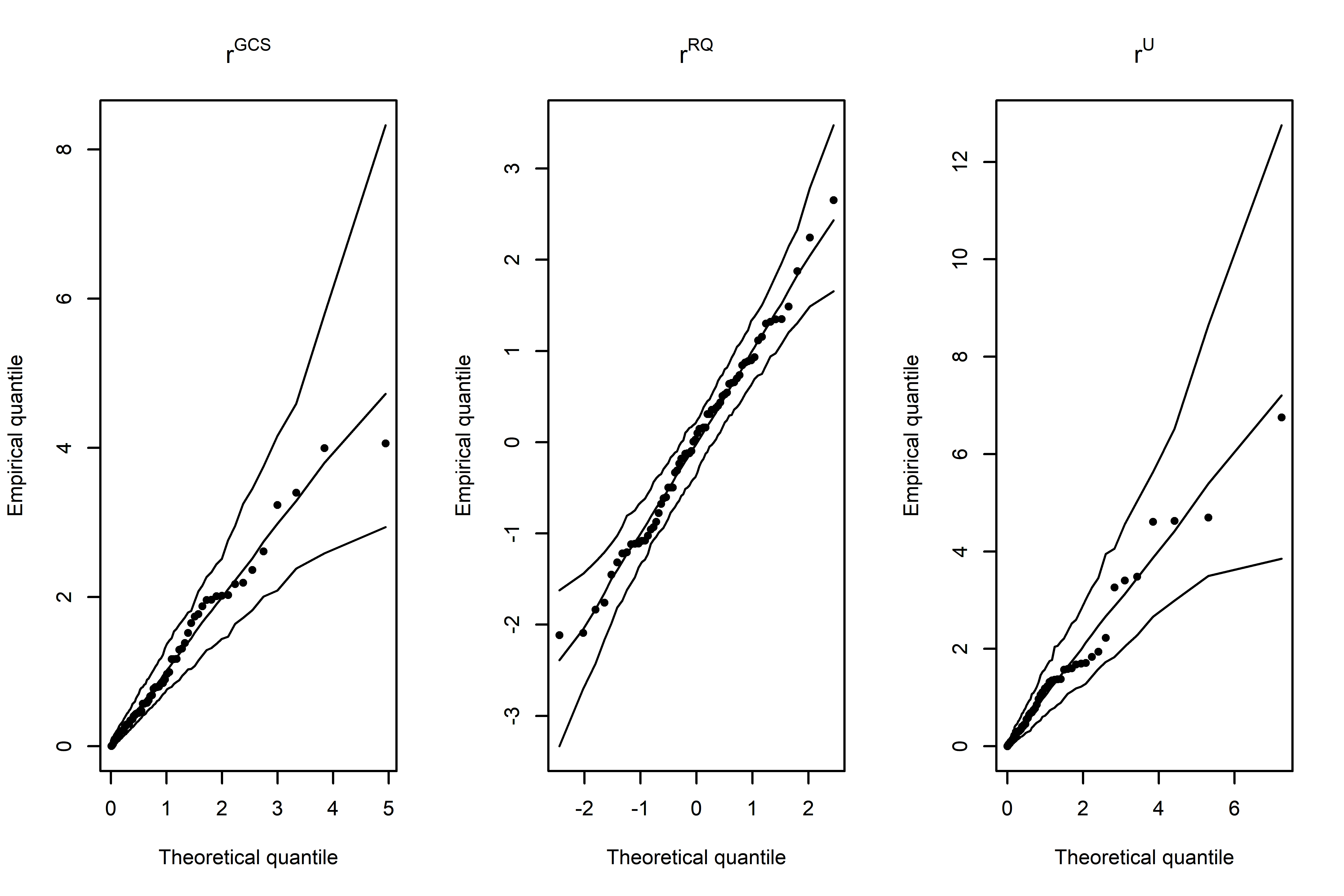}
	\caption{QQ plot and its envelope for the $r^{\textrm{GCS}}$, $r^{\textrm{RQ}}$ and $r^{\textrm{U}}$ residuals for the LBS regression model for the water evaporation data.}\label{fig:envelopes}
\end{figure}

\begin{figure}[!ht]
	\centering
	\includegraphics[scale=0.8]{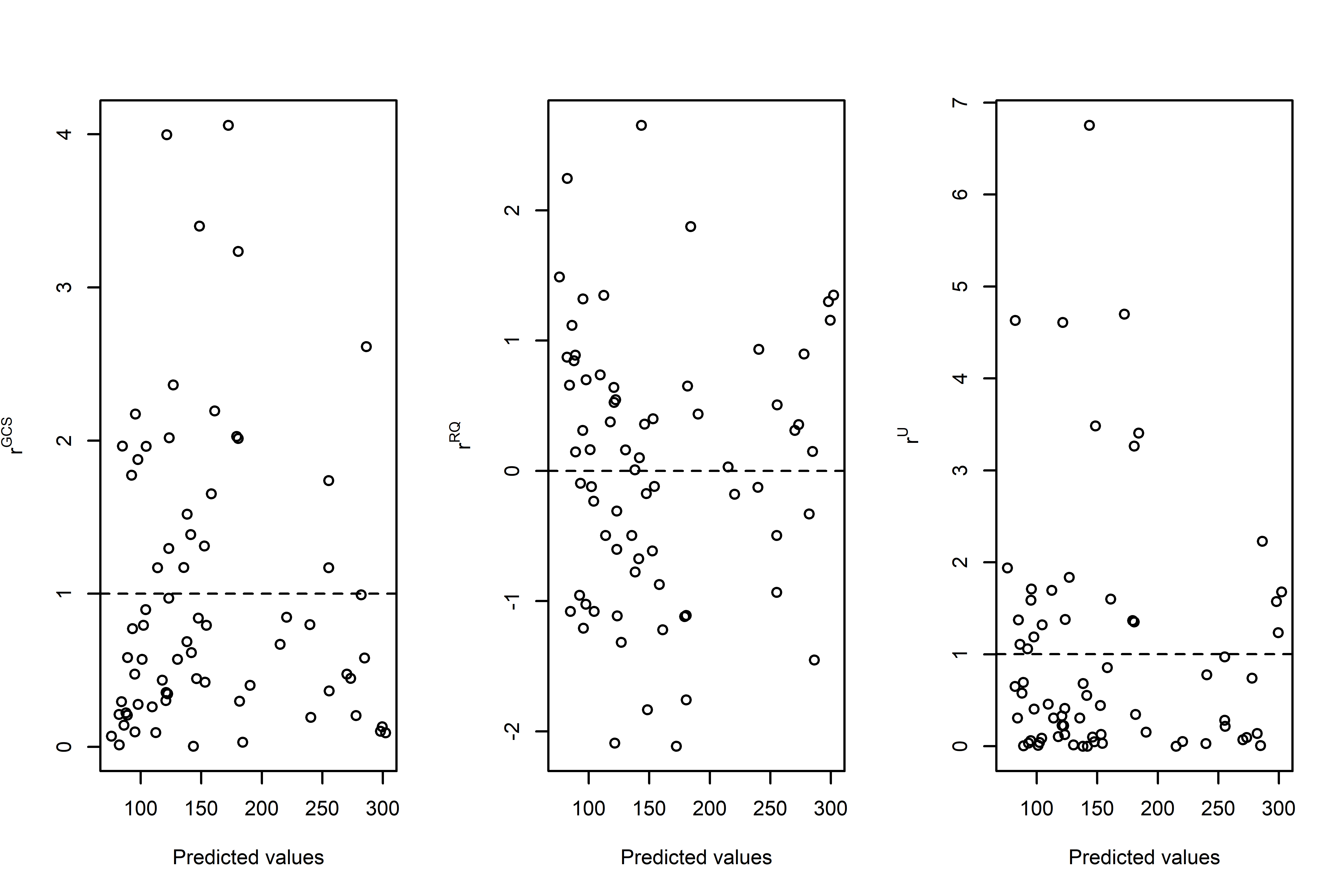}
	\caption{Predicted values against the $r^{\textrm{GCS}}$, $r^{\textrm{RQ}}$ and $r^{\textrm{U}}$ residuals.}\label{fig:res_pred}
\end{figure}

\section{Concluding remarks}\label{sec:6}

We have proposed and analyzed a new regression model based on the length-biased version of the Birnbaum-Saunders distribution proposed by \cite{lsa:09}. We have derived novel properties of the length-biased Birnbaum-Saunders distribution, which is both useful and practical for environmental sciences. We have considered the maximum likelihood method for parameter estimation. We have addressed interval estimation and studied three types of residuals. Monte Carlo simulations were carried out to evaluate the behaviour of the maximum likelihood estimates, the coverage probabilities of the confidence intervals and the empirical distribution of the residuals. The simulation results (a) have shown good performaces of the maximum likelihood estimates; (b) indicated that the bias-corrected and accelerated confidence interval  has the performance; and (c) indicated that the considered residuals conform well with their reference distributions. We have applied the proposed length-biased Birnbaum-Saunders regression model to a real meteorological data. The application has favored 
the use of the proposed regression model. As part of future research, it will be of interest to implement influence diagnostic tools. Furthermore, multivariate versions of the proposed length-biased Birnbaum-Saunders regression model can be studied. Finally, generalization the proposed model for the case with censored data can be investigated. Work on these problems is currently in progress and we hope to report these findings in future papers.

\section*{Acknowledgments}

We gratefully acknowledge financial support from CAPES and CNPq, Brazil.

\end{document}